\def\pr{{ \: \preceq \:}}
\def\xi{{\chi}}
\def\s{{\sigma}}
\def\D{{\Delta}}
\def\P{{P}}
\def\R{{r}}
\def\l{{\ell}}
\def\A{{\cal A}}
\def\W{{W}}
\def\rootnf{{\lceil \sqrt{n} \rceil}}
\def\rootsigma{{\lceil \sqrt{\sigma} \rceil}}
\long\def\cut#1{{}}
\newtheorem{lemma}{Lemma}
\newtheorem{theorem}{Theorem}
\begin{document}

\title{Observability of Lattice Graphs}
\author{{Fangqiu Han \qquad Subhash Suri \qquad Xifeng Yan}\\[2ex]
	Department of Computer Science \\
	University of California \\
	Santa Barbara, CA 93106, USA}
\date{}
\maketitle

\begin{abstract}

We consider a graph observability problem: \emph{how many edge colors are needed for an
unlabeled graph so that an agent, walking from node to node, can uniquely determine its
location from \emph{just} the observed color sequence of the walk?}
Specifically, let $G(n,d)$ be an edge-colored subgraph of $d$-dimensional (directed or
undirected) lattice of size $n^d = n \times n \times \cdots \times n$. We say that $G(n,d)$
is \emph{$t$-observable} if an agent can uniquely determine its current position in the
graph from the color sequence of any $t$-dimensional walk, where the dimension is the
number of different \emph{directions} spanned by the edges of the walk. A walk in an
undirected lattice $G(n,d)$ has dimension between $1$ and $d$, but a directed walk can
have dimension between $1$ and $2d$ because of two different orientations for each axis.

We derive bounds on the number of colors needed for $t$-observability.
Our main result is that $\Theta(n^{d/t})$ colors are both necessary and sufficient
for $t$-observability of $G(n,d)$, where $d$ is considered a constant.
%
%
This shows an interesting dependence of graph observability on the \emph{ratio}
between the dimension of the lattice and that of the walk. In particular,
the number of colors for \emph{full-dimensional} walks is $\Theta(n^{1/2})$ in the
directed case, and $\Theta(n)$ in the undirected case, \emph{independent} of the lattice
dimension.
All of our results extend easily to non-square lattices:
given a lattice graph of size $N = n_1 \times n_2 \times \cdots \times n_d$,
the number of colors for $t$-observability is $\Theta (\sqrt[t]{N})$.
\end{abstract}


\setcounter{page}{1}

\section{Introduction} \label{sec:intro}

Imagine an agent or a particle moving from node to node in an edge-colored graph.
During its walk, the agent only learns the colors of the edges it traverses. If after
a sufficiently long walk, the agent can uniquely determine its current node, then
the graph is called observable. Namely, an edge-colored graph is \emph{observable} if
the current node of an arbitrary but sufficiently long walk in the graph can be
uniquely determined simply from the color sequence of the edges in the walk~\cite{Jungers}.
A fundamental problem in its own right, graph observability also models the ``localization''
problem in a variety of applications including monitoring, tracking, dynamical systems
and control, where only partial or local information is available for
tracking~\cite{aslam1,lavalle,Crespi,lopez,guy,Lind,Ozveren,suri-track1}.
This is often the case in networks with hidden states, anonymized nodes, or
information networks with minimalistic sensing: for instance, observability
quantifies how little information leakage (link types) can enable precise tracking
of users in anonymized networks. As another contrived but motivating scenario,
consider the following problem of robot localization in minimally-sensed environment.

A low-cost autonomous robot must navigate in a physical environment using its own
odometry (measuring distance and angles). The sensor measurements are noisy
and inaccurate, the robot invariably accumulates errors in the estimates of its
own position and pose, and it must perform periodic relocalizations. Without global
coordinates (GPS), unique beacons, or other expensive navigation aids, this is a
difficult problem in general. In many situations, however, an approximate
relocalization is possible through inexpensive and ubiquitous sensors, such as
those triggered by passing through doors (beam sensors). Privacy or cost concerns,
however, may prevent use of \emph{uniquely identifiable} sensors on all doors or
entrances. Instead, sensors of only a few types (colors) are used as long as
we can localize the robot from its \emph{path history}. Formally, the robot's
state space (positions and poses) is a subset of ${\cal R}^d$, which is partitioned
into $N  = n \times n \times \cdots \times n$ cells, each representing a desired
level of localization accuracy in the state space. We assume access to
a minimalist binary sensor that detects the robot's state transition from one cell
to another. The adjacency graph of this partition is a $d$-dimensional lattice graph,
and our robot localization problem is equivalent to observability, where
edge colors correspond to different type of sensors on cell boundaries.

\vspace*{-0.25cm}

\paragraph{Problem Statement.}

With this general motivation, we study the observability problem for (subgraphs of)
$d$-dimensional lattices (directed and undirected), and derive upper and lower bounds
on the minimum number of colors needed for their observability. Lattices, while lacking the
full power of general graphs, do provide a tractable but non-trivial setting: their
uniform local structure and symmetry makes localization challenging but, as we will
show, their regularity allows coding schemes to reconstruct even relatively short walks.
We begin with some definitions to precisely formulate our problem and the results.

Let $G(n,d)$ denote a subgraph of $d$-dimensional regular square lattice of size
$N = n^d$. We want to color the edges of $G(n,d)$ so that a walk in the graph can
be localized based solely on the colors of the edges in the walk. The starting
node of the walk is not known, neither is any other information about the walk
except the sequence of colors of the edges visited by the walk. By localizing the
walk, we mean that its current node can be uniquely determined.

When the graph is directed, an edge has both a natural dimension and an \emph{orientation}:
dimension is the coordinate axis parallel to the edge, and orientation is the direction
along that axis (positive or negative). There are $2d$ distinct orientations in a
directed lattice, two for each axis.  When the graph is undirected, each edge has a
dimension but not an orientation.  The lack of orientation makes the observability of
undirected graphs more complicated: in fact, \emph{even if all edges have distinct
colors}, the agent can create arbitrarily long walks by traversing back and forth on
a single edge so that one cannot determine the current vertex from the color sequence
alone. Nevertheless, we show that any undirected walk that includes \emph{at least two
distinct edges} can be observed (localized).

\paragraph{Our Results.}

We show that lattice observability depends not on the \emph{length} of the walk, but
rather on the number of different directions spanned by the walk. In order to discuss
both directed and undirected graphs without unnecessary notational clutter, we use
a common term \emph{dimension} to count the number of different directions: it is the
number of distinct edge orientations in a directed walk and the number of distinct axes
spanned by an undirected walk.
In particular, we say that a \emph{directed} walk $W$ has \emph{dimension $t$} if it
includes edges with $t$ distinct orientations, for $t \leq 2d$.  An \emph{undirected} walk
$W$ has \emph{dimension $t$} if it includes edges parallel to $t$ distinct axes, for $t \leq d$.
We say that $G(n,d)$ is \emph{$t$-observable} if an agent, walking from node to node,
can uniquely determine its current position from the color sequence of any $t$-dimensional walk.

Our main result shows that $O(\sqrt[t]{N})$ colors are always sufficient for
$t$-observability of (directed or undirected) lattice graph $G(n,d)$, where $N = n^d$
is the size of the graph and $d$ is assumed to be a constant. A matching lower bound
easily follows from a simple counting argument. The upper bound proof uses a combinatorial
structure called \emph{orthogonal arrays} to construct the observable color
schemes, which may have other applications as well.
We prove the results for subgraphs of square lattices, but the bounds are easily extended
to rectangular lattices: given a lattice graph of size
$N = n_1 \times n_2 \times \cdots \times n_d$,
the number of colors for $t$-observability is $\Theta (\sqrt[t]{N})$.
An interesting implication is that for full-dimensional walks, the number of colors is
independent of $d$: it is $O(n^{1/2})$ for directed, and $O(n)$ for undirected, graphs.
%



\paragraph{Related Work.}

The graph observability problem was introduced by Jungers and Blodel~\cite{Jungers},
who show that certain variations of the problem are \emph{NP}-complete in general
\emph{directed} graphs. They also present a polynomial-time algorithm for deciding if
an edge-colored graph $G$ is observable based on the fact that the following two
conditions are necessary and sufficient:
$(i)$ $G$ does not have an \emph{asymptotically reachable} node $u$ with two outgoing edges
of the same color, where an asymptotically reachable node is one reachable by arbitrarily
long paths, and
$(ii)$ $G$ does not have two cycles with the same edge color sequence but different node orders.
These results hold only for directed graphs, and not much is known for observability
of \emph{undirected} graphs.
By contrast, our results show a universal (extremal) bound that holds for all lattices
of size $N = n^d$, and apply to both directed and undirected graphs.

The graph observability is related to a number of other concepts in dynamical
systems, including \emph{trackable graphs} where the goal is to detect and identify
dynamical processes based on a sequence of sensor observations~\cite{Crespi},
discrete event systems where the goal is to learn the state of the process based
purely on the sequence (colors) of the transitions~\cite{Ozveren}, and
the \emph{local automata} where a finite state automaton is called $(d,k)$-local
if any two paths of length $k$ and identical color sequence pass through the same
state at step $d$~\cite{Lind}. The primary objectives, however, in those papers
are quite different from the combinatorial questions addressed in our paper.


\section{Definitions and the Problem Statement}
\label{sec:defs}

A \emph{$d$-dimensional lattice graph} is one whose drawing in Euclidean $d$-space forms a
regular tiling. Specifically, such a graph of size $N = n^d$ has nodes at the integer-valued
points $(x_1, x_2, \ldots, x_d )$, for $x_i \in \{0, 1, \ldots, n-1 \}$. Two nodes can be
connected if their distance is one: in undirected graphs, there is at most one such edge,
and in directed graphs, there can be two edges with opposite orientations.
(Our results hold for any subgraph of the lattice, and do not require the full lattice.)
In this and the following three sections, we focus on \emph{directed} graphs only, and
return to undirected graphs in Section~\ref{sec:Undir}.
We use the notation $G(n,d)$ for a directed lattice graph of size $n^d$.
%
%
We say a directed walk $W$ in $G(n,d)$ is \emph{$t$-dimensional} if it includes edges with
$t$ distinct orientations, for $t \leq 2d$. An edge colored graph $G(n,d)$ is called
\emph{$t$-observable} if an agent can uniquely determine its current position in the
graph from the observed edge color sequence of \emph{any} directed walk of dimension $t$,
for $t \leq 2d$.
Figure~\ref{fig:fig1} below show a 4-observable graph (1a) and a non-observable graph 1(b).
The main focus of our paper is to derive bounds on the minimum number of colors that
suffice for $t$-observability of $G(n,d)$.
We begin with a few basic definitions and preliminaries.

\begin{figure}[htbp]
\centering
\includegraphics[angle=0, width=0.5\textwidth]{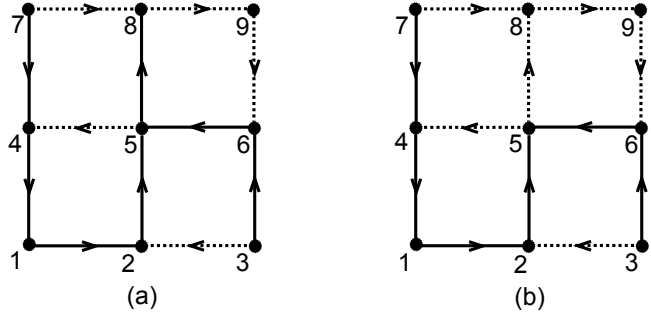}
\caption{Two nearly identical 2-colored lattice graphs, one observable (on left), the other
        non-observable (on right). The two colors are shown as \emph{solid} (S) and
        \emph{dashed} (D). Only the color of edge $(5,8)$ differs in the two graphs.
        In (b), the color sequence $(SSSD)^*$ can lead to either node 4 or 8, making
        it non-observable. In (a), any walk of dimension 4 is observable.}
\label{fig:fig1}
\end{figure}

The embedding of the lattice graph $G(n,d)$ induces a total order $\pr$ on the nodes. Let
$u = (x_1, x_2, \ldots, x_d )$ and $u' = (x'_1, x'_2, \ldots, x'_d )$ of two nodes in
the graph. Then, we say that $u \pr u'$ if $u$ precedes $u'$ in the coordinate-wise
lexicographic ordering. That is, $u \pr u'$ if either $x_1 < x'_1$, or
$x_1 = x'_1, \ldots, x_i = x'_i$ and $x_{i+1} < x'_{i+1}$, for some $1 \leq i < d$.
Given a directed walk in $G$, there is a unique \emph{minimum node} under this total
order, which we call the \emph{root} of the walk.
The node-ordering also allows us to associate edges with nodes. If $e = (u, v)$ is a
directed edge, then we say that $e$ is \emph{rooted} at $u$ if $u \pr v$, and at $v$
otherwise. (We remark that the root of an edge is unrelated to its orientation: it simply
allows us to associate edges to nodes in a unique way.) Thus, for any node $u$ in $G$,
at most $2d$ edges may be rooted at $u$: at most two directed edges in each
dimension for which $u$ is minimum under the $\pr$ order.
In Figure~\ref{fig:fig1}, for example, the edges $(1,2)$ and $(4,1)$ are
both rooted at node $1$, while both $(6,5)$ and $(5,8)$ are rooted at $5$.
The walk $(5,4,1,2,5,8)$ is rooted at node $1$.

Each edge of the graph $G$ also has a natural orientation: it is directed either
in the positive or the negative direction along its axis. To be able to refer to
this directionality, we call an edge \emph{$j$-up-edge} (resp., \emph{$j$-down-edge}) if
it has \emph{positive} (resp. negative) orientation in $j$th dimension.
In Figure~\ref{fig:fig1}, for instance, the edge $(5,8)$ is the $y$-up-edge rooted at $5$,
and $(6,5)$ is the $x$-down-edge at $5$.


\section{A Lower Bound for $t$-Observability}


\begin{theorem}	\label{thm:lb}
A directed lattice $G(n,d)$ requires at least $(n/2)^{d/t}$ colors for $t$-observability
in the worst-case, for any $t \leq 2d$.
\end{theorem}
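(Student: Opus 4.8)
The plan is to prove the lower bound by a counting (pigeonhole) argument: if too few colors are used, then two distinct nodes must be reachable by walks producing identical color sequences, contradicting $t$-observability. The key is to isolate a sublattice on which we have full control, force all walks to stay of dimension at most $t$, and count the color sequences such walks can generate.

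First I would fix a set of $t$ ``active'' orientations — concretely, pick $\lceil t/2 \rceil$ (or as many as needed up to $t$) coordinate axes and use one or both orientations along each so that exactly $t$ orientations are available; the remaining $2d - t$ orientations will simply never be used by the walks we consider, so every walk below is automatically $t$-dimensional (or lower-dimensional, which is fine since $t$-observability must handle those too, but to be safe I would engineer walks that use exactly the $t$ chosen orientations). Restricting to the axes that are ``active,'' consider the sub-box of side length roughly $n$ in those $t/2$-ish dimensions; more carefully, I want a set $S$ of nodes of size about $(n/2)^{d}$ such that from a common base node $r$ one can reach every node of $S$ by a directed walk that uses only the $t$ chosen orientations and whose length is bounded. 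The natural choice: take $S$ to be an axis-aligned sub-box anchored at a corner, of side length $\lfloor n/2 \rfloor$ in \emph{every} dimension, and route to each target using monotone moves; since we have $t$ orientations to cover $d$ axes, we can reach an entire $d$-dimensional sub-box of side $n/2$ as long as for each axis we possess at least the orientation needed to move ``inward'' from the anchor corner — which we can guarantee by choosing the anchor corner appropriately relative to which orientations are active. This gives $|S| \geq (n/2)^{d}$ nodes, each reachable from $r$ by a $t$-dimensional walk.

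Next, observe that the \emph{color sequence} of a walk from $r$ to a target in $S$ has bounded length: a monotone walk inside a box of side $n/2$ has length at most $d \cdot (n/2) \le \frac{dn}{2}$, but to get the clean exponent I instead encode each target by the \emph{unordered} multiset / by a canonical shortest monotone walk, and count more cleverly. The cleanest version: for a walk of length $\ell$ over an alphabet of $c$ colors there are at most $c^{\ell}$ color sequences, but $\ell$ grows with $n$, giving the wrong bound. So instead I would route to the $(n/2)^d$ targets using walks all of the \emph{same} length $L$ (pad shorter ones — though padding needs care in a directed lattice; alternatively, group targets by the length of their canonical route and apply pigeonhole within the largest group, losing only a factor polynomial in $n$ which is absorbed since $d,t$ are constants). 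For walks of a fixed length $L = \Theta(n)$ the number of distinct color sequences is at most $c^{L}$; if $c^{L} < (n/2)^{d}/\mathrm{poly}(n)$ then two targets collide, so we need $c^{L} \gtrsim (n/2)^d$. That still looks like it gives $c \gtrsim (n/2)^{d/L}$, which is too weak — so the real argument must bound the \emph{walk length in terms of the number of directions}, not $n$: a $t$-dimensional monotone walk that we use to separate points need only have length $t$ in the ``informative'' sense. The right framing: encode a target in $S$ by which of the $t$ orientations it reached and \emph{how far} along each, i.e.\ a vector in $\{0,1,\dots,n/2\}^{t'}$ where $t' \le t$; the color sequence of the canonical monotone route then has exactly $|S|^{1/t'}$-ish structure...

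Let me restate the intended clean argument, which is the one I expect to carry through. The hard part, and the crux, is the following: build $(n/2)^{d/t}$ \emph{per-axis levels} so that a $t$-dimensional walk's color sequence is effectively a word of length $t$ over a color alphabet, one color "slot" per active direction; since a target node is determined by its $d$ coordinates but a $t$-dimensional monotone walk only exposes $t$ coordinate-increments, the number of distinguishable nodes is at most (number of length-$t$ color words) $= c^{t}$ — wait, that over-counts directions, not distances. The correct statement: along a single fixed orientation, a monotone walk of any length through $k$ edges of that orientation produces a color \emph{subsequence} of length $k$, and with $c$ colors there are $\sum_{k\le n} c^k = O(c^{n})$ such — too many.

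I will therefore commit to the version that actually yields the exponent $d/t$: partition each axis into $c$ blocks is the wrong primitive; instead, the lower bound comes from the dual of the upper-bound orthogonal-array construction. A $t$-observable coloring with $c$ colors must, in particular, let the agent read off from a $t$-dimensional walk enough to pin down a point in an $n^d$-size space; a $t$-dimensional walk sees edges in $t$ directions, and the ``new information'' per direction is at most $\log c$ bits \emph{per distinct color block boundary crossed}, but to separate all $n$ positions along the directions the walk spans one needs the colors in those $t$ directions to realize $n^{d}$ combinations, forcing $c^{t} \ge n^{d}$ up to the factor-$2$ slack from boundary effects, i.e.\ $c \ge (n/2)^{d/t}$. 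Concretely: take any node $v$ deep in the lattice (all coordinates in $[n/2, n-1]$), and for each of the $t$ active orientations consider reaching $v$ by a walk that approaches $v$ monotonically along exactly those $t$ orientations, choosing the \emph{entry offsets} freely within range $n/2$; a routine injectivity/pigeonhole count on the resulting color sequences (which live in a set of size $\le c^{t}$ once we collapse each monotone run to its final crossed-color vector) shows $c^{t} \ge (n/2)^{d}$ is necessary to avoid a collision between two such $v$, completing the proof. The main obstacle I anticipate is making the ``collapse to a length-$t$ color vector'' rigorous despite runs of repeated colors and the directedness constraint — this is exactly where choosing the active orientations and the anchor corner compatibly does the work, and where one spends the factor of $2$ (hence $n/2$ rather than $n$).
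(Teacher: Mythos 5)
There is a genuine gap. Every construction you consider routes long monotone walks (length $\Theta(n)$) from a common anchor to $(n/2)^d$ targets, and you correctly observe that the resulting pigeonhole only gives $c^{L}\ge (n/2)^d$ with $L=\Theta(n)$, which is far too weak. Your attempted repairs --- ``collapse each monotone run to its final crossed-color vector'' and the information-per-direction heuristic --- do not close this gap: the agent observes the \emph{entire} length-$L$ color string, so an adversarial coloring can realize up to $c^{L}$ distinct observations among your targets, and nothing in your argument forces two of them to collide. A lower bound must exhibit an actual collision of color sequences between walks with different current nodes, and with long walks no such collision is forced. The claim that a $t$-dimensional walk conveys only ``$t$ color slots'' of information is false in general; it is a property you would have to engineer, not one you can assume.

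The missing idea, which is how the paper proceeds, is to use many \emph{short} walks rather than long walks from a common source: take the $(n/2)^d$ pairwise edge-disjoint unit $d$-cubes rooted at the even nodes $(2x_1,\ldots,2x_d)$, and inside each cube build one $t$-dimensional directed walk of length exactly $t$ (the first $\min\{t,d\}$ steps take the $j$-up-edge in axis $j$, and the remaining $t-d$ steps take down-edges in the axes $2d-t+1,\ldots,d$, so all $t$ orientations are distinct). These $(n/2)^d$ walks are pairwise disjoint, hence have pairwise distinct current nodes, and each produces a color word of length $t$; since $k$ colors yield at most $k^{t}$ such words, $t$-observability forces $k^{t}\ge (n/2)^d$, i.e.\ $k\ge (n/2)^{d/t}$. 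The factor of $2$ comes from spacing the cubes to make them disjoint, not from any boundary-crossing slack as you suggest.
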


\begin{proof}
Assume, without loss of generality, that $n$ is even. The nodes of $G(n,d)$ have
coordinates of the form $(x_1, x_2, \ldots, x_d)$, with $x_j \in \{0, 1, \ldots, n-1 \}$,
for all $j=1,2, \ldots, d$. Consider $(\frac{n}{2})^d$ unit $d$-cubes rooted at all
the \emph{even} nodes, namely, $(2x_1, 2x_2, \ldots, 2x_d)$, with
$x_j \in \{0,1, \ldots, n/2 \}$, for $j=1, 2, \ldots, d$. These unit cubes are pairwise
edge-disjoint.  We assign orientations to the some of the edges to create many $t$-dimensional
walks, and then use a counting argument to lower bound the number of colors needed for
the $t$-observability of this graph.
See Figure~\ref{fig:lb} for an illustration.
\begin{figure}[htbp]
\centering
\includegraphics[angle=0, width=0.5\textwidth]{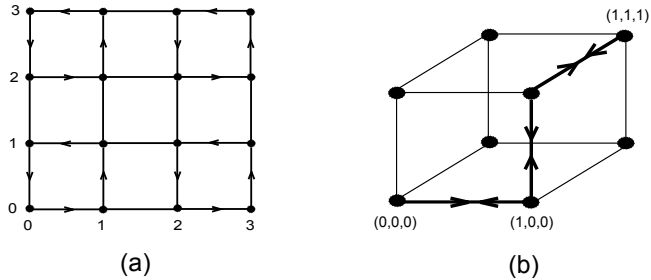}
\caption{Illustration of the lower bound. The left figure (a) shows the \emph{even} 2-cubes and
orientation of walks. The right figure (b) shows how to orient a $6$-dimensional walk for $d=3$.}
\label{fig:lb}
\end{figure}
Consider a prototypical copy of a $d$-cube, with opposite corners at
$u = (0, 0, \ldots, 0)$ and $v = (1, 1, \ldots, 1)$.
We construct a $t$-dimensional directed walk of length $t$, as follows.
Starting at $u$, for the first $\min \{t, d \}$ steps, take the $j$-up-edge in step
$j = 1, 2, \ldots, t$; for the remaining $(t-d)$ steps, take the $(2d-t+j)$-down-edges, for
$j = 1, 2, \ldots, t-d$. This construction assigns directions to $t \leq 2d$
edges of the $d$-cube; the remaining edges can be directed arbitrarily.
(Figure~\ref{fig:lb}b illustrates the construction for $d=3$ and $t=6$.)
%
By repeating the construction at all $d$-cubes rooted at the even nodes of $G(n,d)$,
we get $(n/2)^d$ disjoint $t$-dimensional walks, which must have pairwise distinct color
sequences for $t$-observability.
Since $k$ colors can produce at most $k^{t}$ distinct color sequences of length $t$,
the minimum number of colors $k$ satisfies $k^{t} \:\geq\: (\frac{n}{2})^d$, from which it follows
that $k \:\geq\: (n/2)^{d/t}$.
It is easy to see that this argument holds for undirected walk as well, where we just
have to ensure that $t \leq d$. This completes the proof.
\end{proof}

We now describe the main result of the paper, an upper bound for $t$-observability.
In order to build some intuition for the proof, and explain the coloring scheme,
we first consider a much simpler special case: the 2-dimensional lattice $G(n,2)$
and full-dimensional walks, namely, $t=4$, where we show that $O(\sqrt{n})$ colors suffice.
While the coloring and the decoding
techniques for the general case are somewhat different, this special case is
useful to explain the main ideas.

\section{Observability in $2$-Dimensional Lattices}

In discussing the two-dimensional lattice, we name the two coordinate axes $x$
and $y$, instead of $x_1, x_2$. Similarly, we use the more natural and visual
use of left-right and up-down for the directionality of edges; i.e., we say
left (resp. right) edge instead of $1$-up (resp. $1$-down) edge, and
up (resp. down) edge instead of $2$-up (resp. $2$-down) edge.  We begin with a
discussion of our coloring scheme, and then show its correctness.  We will use
4 blocks of colors, one each for \emph{left, right, up} and \emph{down} types of
edges, where each block has $\rootnf$ colors.
The coloring depends on the position of the root node $u$ associated with the edge,
and uses the quotient and the remainder of $u$'s coordinates \emph{modulo} $\rootnf$.
We use the notation $x \div m$ to denote the quotient, and $x \bmod m$ to
denote the remainder.

Specifically, consider a node $u = (x_u, y_u)$ in $G(n,2)$. There are at most 4 edges
rooted at $u$: a right outgoing edge, a left incoming edge, an up outgoing edge, and a
down incoming edge. The following algorithm assigns colors to all edges \emph{rooted} at
node $u = (x_u, y_u)$.

\paragraph{{\sc Color2}$(u)$:}

\begin{itemize} 	\advance\itemsep by -4pt
\item If $e$ is the outgoing \emph{right} edge of $u$, give it color $(x_u \div  \rootnf)$;\\
	if it is outgoing \emph{up} edge, give it color $(y_u \div \rootnf) + 2\rootnf$.
\\

\item If $e$ is the incoming \emph{left} edge of $u$, give it color $(x_u \bmod \rootnf) + \rootnf$;\\
	if it is incoming \emph{down} edge, give it color $(y_u \bmod \rootnf) + 3\rootnf$.



\end{itemize}


\cut{
Intuitively, all rightward edges receive one of the first $\rootnf$ colors based on the
\emph{quotient} of the root node's $x$ coordinate. (Recall that $x_u < n$.)
The leftward edges receive a color
indexed by $x_u \bmod \rootnf$, and the additive term $\rootnf$ ensures that these
colors come from the second batch of $\rootnf$ colors. Similarly, the up and down
edges receive colors based on the $y$ coordinate of $u$, and these colors are drawn
from 3rd and 4th batch of $\rootnf$ colors.
}
The following lemma is easy.

\begin{lemma}
All edges of $G(n,2)$ are colored, using at most $4 \rootnf$ colors.
No two outgoing edges of a node are assigned the same color.
\end{lemma}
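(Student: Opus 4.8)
The plan is to verify the two assertions in the lemma directly from the definition of {\sc Color2}$(u)$, checking first that every edge receives a color and that these colors all lie in the claimed range, and then that outgoing edges of a common node get distinct colors. For the first part, I would observe that every edge $e = (u,v)$ of $G(n,2)$ is rooted at exactly one endpoint — the one that is smaller under the $\pr$ order — and at that root node it is one of the four types (outgoing right, incoming left, outgoing up, incoming down), so exactly one of the four rules in {\sc Color2} applies and assigns it a color. Hence every edge is colored exactly once. For the range bound, since $x_u, y_u \in \{0, 1, \ldots, n-1\}$, both the quotient $x_u \div \rootnf$ and the remainder $x_u \bmod \rootnf$ lie in $\{0, 1, \ldots, \rootnf - 1\}$ (the quotient because $x_u < n \le \rootnf^2$, so $x_u \div \rootnf < \rootnf$). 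Therefore the four rules produce colors in the disjoint blocks $[0, \rootnf)$, $[\rootnf, 2\rootnf)$, $[2\rootnf, 3\rootnf)$, and $[3\rootnf, 4\rootnf)$ respectively, so all colors used come from a palette of at most $4\rootnf$ colors.

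For the second assertion, I would note that a node $u$ has at most two \emph{outgoing} edges rooted at it: the outgoing right edge and the outgoing up edge. (The other two edges rooted at $u$, namely the incoming left and incoming down edges, are incoming, not outgoing.) By the rules, the outgoing right edge gets a color in $[0, \rootnf)$ and the outgoing up edge gets a color in $[2\rootnf, 3\rootnf)$; since these blocks are disjoint, the two colors differ. One should also briefly address outgoing edges of $u$ that are \emph{not} rooted at $u$ — for instance an outgoing left edge or an outgoing down edge of $u$ — but such an edge $e = (u, v)$ has $v \pr u$, hence is rooted at $v$, and its color is determined by $v$'s coordinates in a different block than $u$'s outgoing rooted edges occupy, or more simply: an outgoing left edge of $u$ lands in block $[\rootnf, 2\rootnf)$ and an outgoing down edge in block $[3\rootnf, 4\rootnf)$, so in every case the (at most four) outgoing edges of any node receive colors from four pairwise disjoint blocks and are therefore all distinct.

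This lemma is genuinely routine — there is no real obstacle, only bookkeeping — so the main thing to be careful about is the arithmetic fact that $x < n$ implies $x \div \rootnf < \rootnf$, which relies on $n \le (\rootnf)^2$, and the observation that the four color blocks used by the four edge-type rules are pairwise disjoint by construction (the additive offsets $0, \rootnf, 2\rootnf, 3\rootnf$ separate them). I would present the proof as two short paragraphs mirroring the two sentences of the lemma statement, invoking the earlier discussion that each edge is rooted at a unique node and that at most $2d = 4$ edges are rooted at any node in $G(n,2)$.
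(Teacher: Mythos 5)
Your proof is correct, and it takes the only natural route: the paper itself offers no proof (it just declares the lemma ``easy''), and your two-paragraph verification---each edge is rooted at a unique endpoint where exactly one of the four rules applies, the quotient/remainder both lie in $\{0,\ldots,\rootnf-1\}$ because $n \le \rootnf^2$, and the four outgoing edges of any node (whether rooted at that node or at a neighbor) land in the four pairwise disjoint color blocks offset by $0, \rootnf, 2\rootnf, 3\rootnf$---is exactly the bookkeeping the authors left implicit. Your attention to the outgoing left/down edges that are rooted at a \emph{neighboring} node is the one genuinely non-trivial detail, and you handle it correctly.
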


A simple but important fact throughout our analysis is the following
\emph{tracing lemma}.

\begin{lemma} {[Tracing Lemma]} \label{lem:tracing2}
Let $W$ be a walk in $G(n,2)$ under the coloring scheme {\sc Color2}. Fixing the
position of any node of $W$ leads to a \emph{unique} embedding of $W$ in $G(n,2)$.
\end{lemma}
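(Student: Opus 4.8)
The plan is to prove the Tracing Lemma by induction on the length of the walk, using the fact that {\sc Color2} assigns distinct colors to the outgoing edges of any node (the preceding lemma) together with the observation that the color of an edge encodes enough information about its \emph{root} to locate its other endpoint once one endpoint is known. The base case is a walk of length zero (a single node), which is trivially pinned down by fixing its position. For the inductive step, suppose $W = (v_0, v_1, \ldots, v_m)$ and we have fixed the position of some $v_i$; it suffices to show that fixing $v_i$ determines $v_{i+1}$ (when $i < m$) and $v_{i-1}$ (when $i > 0$), since then we can propagate the embedding in both directions along $W$, one vertex at a time.

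First I would handle the forward step: given the position of $v_i$ and the observed color $c$ of the edge $(v_i, v_{i+1})$, I want to recover $v_{i+1}$. The edge $(v_i,v_{i+1})$ is an outgoing edge of $v_i$, and since the four color blocks (left, right, up, down) are disjoint, the block containing $c$ tells the agent the \emph{type} (orientation) of the edge, hence the axis and direction of $v_{i+1}$ relative to $v_i$; its coordinate value is then forced. Concretely, if $c$ lies in the ``right'' block then $(v_i, v_{i+1})$ is a right edge rooted at $v_i$, so $v_{i+1} = v_i + (1,0)$ — no ambiguity. If instead $c$ lies in the ``left'' block, then $(v_i, v_{i+1})$ is a left-oriented edge, and since $v_i \pr v_{i+1}$ would make it rooted at $v_i$ (contradicting it being a left edge, which decreases $x$)… I need to be slightly careful here: a left outgoing edge of $v_i$ goes to $v_i - (1,0)$, and that edge is rooted at $v_i - (1,0)$, not at $v_i$; but its \emph{color} was assigned by {\sc Color2}$(v_i - (1,0))$ as $((x_{v_i} - 1) \bmod \rootnf) + \rootnf$, which is consistent and still determines that $v_{i+1} = v_i - (1,0)$. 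So in every case the endpoint $v_{i+1}$ is uniquely determined by the known position of $v_i$ and the color $c$. The backward step is symmetric: the edge $(v_{i-1}, v_i)$ is an outgoing edge of $v_{i-1}$, and knowing $v_i$ and the color, the type of the edge again fixes the displacement $v_i - v_{i-1}$, hence $v_{i-1}$.

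The main obstacle, and the only point requiring genuine care, is this bookkeeping about which endpoint an edge is ``rooted'' at versus which endpoint is the head/tail of the directed edge, since {\sc Color2} colors an edge based on the \emph{root} while the walk traverses it in a fixed orientation. I would resolve it by checking all four cases (the edge we are about to traverse, or just traversed, is a left/right/up/down edge) explicitly and confirming that in each case the observed color — read together with its block — pins down the axis, the direction, and therefore the neighbouring lattice point, regardless of which endpoint happens to be the root. Once that case analysis is in hand, the induction closes immediately and the embedding is forced step by step, giving the claimed uniqueness.
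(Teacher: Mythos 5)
Your proof is correct and takes essentially the same approach as the paper's: the disjointness of the four color blocks means each edge's color reveals its orientation (left/right/up/down), and the embedding then propagates deterministically from any fixed node. Your explicit induction and the case analysis distinguishing an edge's root from its traversal direction simply spell out in detail what the paper's two-sentence proof leaves implicit.
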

\begin{proof}
In our coloring scheme, the colors are grouped by the \emph{direction} of edges
(left, right, up and down), and so given the color of an edge, we know its direction.
Once a node of the walk is fixed, all subsequent (and preceding) edges are uniquely
mapped in the lattice graph.
\end{proof}

By the Tracing Lemma, our color sequence uniquely specifies the ``trace''
(or, shape) of the walk's embedding, and once we localize any node of the walk, we can
determine the embedding of the entire walk. Thus, the main problem, which will consume
the rest of the paper, is to decode the position of one node of the walk from the
color sequence. For $G(n,2)$ and $t=4$, we will
use the following simple lemma, which the more complex coloring scheme of
Section~\ref{sec:upper-dir} \emph{does not} need.

\begin{lemma} {[Pairing Lemma]} \label{lem:pairing}
Suppose a directed walk $W$ in $G(n,d)$ contains both up- and down-edges for some dimension $j$.
Then $W$ must contain a $j$-up and a $j$-down edge that are both rooted at vertices with
the same $j$th coordinate.
%
\end{lemma}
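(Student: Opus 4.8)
I would first reduce the statement to a one–dimensional fact by projecting $W$ onto its $j$th coordinate. Writing the successive nodes of $W$ as $v_0, v_1, \ldots, v_m$ and letting $p_t$ be the $j$th coordinate of $v_t$, the sequence $p_0, p_1, \ldots, p_m$ has the property that consecutive terms differ by $0$ or $\pm 1$: a non-$j$ edge leaves the $j$th coordinate unchanged, a $j$-up-edge at step $i$ gives $p_{i+1} = p_i + 1$, and a $j$-down-edge at step $k$ gives $p_{k+1} = p_k - 1$. The first bookkeeping point is that an edge in dimension $j$ is rooted at whichever endpoint has the \emph{smaller} $j$th coordinate (its two endpoints agree in all other coordinates, so the $\pr$-order is decided in coordinate $j$), \emph{regardless of the edge's orientation}. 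Hence a $j$-up-edge at step $i$ is rooted at a vertex with $j$th coordinate $p_i$, and a $j$-down-edge at step $k$ is rooted at a vertex with $j$th coordinate $p_{k+1}$. So the lemma is equivalent to the claim: if $(p_t)$ contains an up-step and a down-step, then there is a value $c$ and indices $i,k$ with $p_i = p_{k+1} = c$, $p_{i+1} = c+1$, and $p_k = c+1$; in other words, some boundary ``between level $c$ and level $c+1$'' is traversed in both directions.

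The key structural observation is that the crossings of any fixed level boundary (between $c$ and $c+1$) alternate in direction: between two consecutive times the sequence crosses that boundary, it must remain on the side it landed on, so the directions go up, down, up, down, $\ldots$. Consequently, if \emph{some} boundary is crossed at least twice, it is crossed in both directions and we are done immediately. It therefore suffices to handle the case in which every boundary is crossed at most once.

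In that case I would list all crossings in the order of their step indices and use the fact that between two consecutive crossings the $j$th coordinate is constant, so the boundary crossed next must be adjacent to the current level. A one-line induction then shows that after the first up-crossing every subsequent crossing is an up-crossing (of the next boundary up), and after the first down-crossing every subsequent crossing is a down-crossing; since $W$ has at least one $j$-up-edge and at least one $j$-down-edge, both an up-crossing and a down-crossing occur, and whichever comes second contradicts the induction. This contradiction rules out the ``every boundary crossed at most once'' case, so some boundary is crossed both ways, giving the required $j$-up and $j$-down edges rooted at vertices with the same $j$th coordinate.

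I do not anticipate a serious obstacle here: the argument is essentially a discrete intermediate-value/parity statement about a $\pm1$-walk on $\mathbb{Z}$. The only points that need care are (a) confirming the ``smaller $j$th coordinate'' description of an edge's root so that the one-dimensional reformulation is faithful, and (b) stating the alternation-of-crossings observation cleanly; everything after that is routine.
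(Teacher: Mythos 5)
Your proof is correct and rests on the same reduction the paper uses: project the walk onto the $j$th coordinate, note that a $j$-parallel edge is rooted at its endpoint of smaller $j$th coordinate regardless of orientation, and conclude that it suffices to exhibit a unit boundary crossed in both directions. The paper reaches that boundary more directly --- since both orientations occur, the $+/-$ sign sequence of the $j$-parallel edges (all other edges leave the $j$th coordinate fixed) must change somewhere, and the two consecutive $j$-edges at that change necessarily project to the same unit interval --- so your alternation observation and the ``every boundary crossed at most once'' case split, while sound, are an unnecessary detour.
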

\begin{proof}
Mark each edge of $W$ parallel to the $j$th axis either $+$ or $-$ depending on
whether it is an up- or a down-edge. Since $W$ contains both a $j$-up and a $j$-down edge,
the sign changes at least once. Assume without loss of generality that it changes
from $+$ to $-$, with $e$ and $f$ being the edges associated with them.  None of the
(unmarked) edges between $e$ and $f$ are parallel to the $j$th axis, so both $e$ and
$f$ project to the same unit interval $(x_j, x_j + 1)$ on the $j$th axis. By convention,
both $e$ and $f$ are rooted at vertices whose $j$th coordinate is $x_j$.
This completes the proof.
\end{proof}

We can now explain our decoding scheme.

\begin{lemma}	\label{lem:upper2D}
Suppose $G(n,2)$ is colored using {\sc Color2}, and $\W$ is a $4$-dimensional walk
in this graph. Then, the color sequence of $\W$ uniquely determines the position of
the root node of $\W$ in the lattice.
\end{lemma}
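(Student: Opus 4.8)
The plan is to decode the two coordinates of the root node separately, using the fact that a $4$-dimensional walk $\W$ contains edges in all four directions (left, right, up, down), and in particular both a left and a right edge, and both an up and a down edge. By the Tracing Lemma, fixing any one node of $\W$ pins down the embedding of the whole walk, so it suffices to recover the $x$- and $y$-coordinates of the root; by symmetry between the two axes, I will focus on recovering $x_{\mathrm{root}}$, the minimum $x$-coordinate over all nodes of $\W$.

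First I would apply the Pairing Lemma to the $x$-axis: since $\W$ has both a right (i.e. $1$-down in the paper's naming, but ``right'' here) and a left edge, there exist a right edge $e$ and a left edge $f$ that are rooted at vertices sharing the same $x$-coordinate, call it $x^*$. From the color of $e$ we read off $x^* \div \rootnf$ (that is the color of a right edge rooted at a node with $x$-coordinate $x^*$), and from the color of $f$ we read off $x^* \bmod \rootnf$ (that is the color of a left edge rooted at a node with $x$-coordinate $x^*$, shifted by $\rootnf$). Combining the quotient and remainder reconstructs $x^*$ exactly, since $0 \le x^* < n$ and $\rootnf \cdot \rootnf \ge n$. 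The remaining step is to convert knowledge of \emph{some} witnessed $x$-coordinate $x^*$ into knowledge of the \emph{root's} $x$-coordinate. Here I would again use the Tracing Lemma: once we know that a particular structural position in the walk (the root of edge $e$, say) has $x$-coordinate $x^*$, the unique embedding determined by the color sequence tells us the $x$-coordinate of every node of $\W$ relative to that one, hence in absolute terms, and in particular the $x$-coordinate of the root of $\W$. The same argument applied to the $y$-axis yields $y_{\mathrm{root}}$, and together these determine the root node.

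One subtlety I want to be careful about is identifying, purely from the color sequence, \emph{which} edge in the sequence plays the role of $e$ and which plays the role of $f$ in the Pairing Lemma, and how to locate their rooted vertices within the decoded trace. The coloring groups colors by direction, so from the sequence alone we know the direction of every edge; walking the sign pattern of the $x$-parallel edges, we find a place where a right edge is immediately followed (ignoring $y$-edges in between) by a left edge, or vice versa, and the Pairing Lemma guarantees this pair is co-rooted at a common $x$-coordinate. The decoded trace (which the Tracing Lemma provides up to translation) lets us see exactly which node that shared root is, so we can attach the value $x^*$ to the correct node of the trace. I expect this bookkeeping — making precise the correspondence between positions in the color sequence, vertices of the abstract walk, and lattice coordinates in the trace — to be the main obstacle, though it is conceptual rather than technical; the arithmetic of quotient-plus-remainder reconstruction is immediate. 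I would present the argument first for $x$ in full and then remark that $y$ is symmetric, invoking the up/down edges and the second and fourth blocks of colors in {\sc Color2}.
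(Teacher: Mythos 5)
Your proposal is correct and follows essentially the same route as the paper's own proof: apply the Pairing Lemma per axis to find a co-rooted positive/negative pair, combine the quotient color of one with the remainder color of the other to reconstruct that shared coordinate, and then use the Tracing Lemma (direction-determining colors) to propagate from that localized node to the root. The extra bookkeeping you flag about locating the paired edges in the color sequence is handled implicitly in the paper by the same sign-change observation you describe, so there is no gap.
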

\begin{proof}
Let $u$ be the root node of $W$, and let $(x_u, y_u)$ be its coordinates.  (These coordinates
are precisely what we want to infer from the color sequence.) By the Pairing Lemma, since
$\W$ includes all four orientations, it has two oppositely oriented edges $e_1$ (positive)
and $e'_1$ (negative), both parallel to the $x$-axis and rooted at vertices with the same $x$
coordinate. Let $v = (x_1, y_1)$ and $v' = (x_1, y'_1)$, respectively, be the root vertices
of $e_1$ and $e'_1$. By the color assignment, $e_1$ and $e'_1$, respectively, receive
colors $(x_1 \div \rootnf)$ and $(x_1 \bmod \rootnf) + \rootnf$, which together are
sufficient to uniquely calculate $x_1$. Since the edge colors uniquely determine the
edge directions (dimension and orientation), we can trace $\W$ from $e_1$ to find the
correct value of $x_u$, the $x$ coordinate of the
root node $u$. Similarly, since $W$ also includes edges with both $y$-orientations,
we can calculate $u_y$, thus uniquely localizing the root node $u$. This completes the proof.
%
\end{proof}

\begin{theorem}	\label{thm:upper2D}
$O(n^{1/2})$ colors suffice for $4$-observability of a directed lattice $G(n,2)$.
\end{theorem}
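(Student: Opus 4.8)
The plan is to assemble Theorem~\ref{thm:upper2D} directly from the three lemmas already in place. The coloring scheme {\sc Color2} uses four blocks of $\rootnf$ colors each, for a total of $4\rootnf = O(n^{1/2})$ colors, so the color budget is immediate. What remains is to argue that this coloring achieves $4$-observability, i.e., that an agent can always recover its current node from the color sequence of \emph{any} $4$-dimensional walk $\W$.

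First I would invoke Lemma~\ref{lem:upper2D}: since $\W$ is $4$-dimensional, it spans all four orientations, so its color sequence uniquely pins down the coordinates $(x_u, y_u)$ of the root node $u$ of $\W$. Next I would invoke the Tracing Lemma (Lemma~\ref{lem:tracing2}): because colors are partitioned by edge direction, the color sequence tells us the direction of every edge, hence the exact ``shape'' of $\W$'s embedding; fixing the now-known root node $u$ then determines the embedding of all of $\W$, and in particular the position of its final (current) node. Combining these two facts shows that the current node is a function of the color sequence alone, which is exactly $4$-observability. The color count $4\rootnf = O(n^{1/2})$ finishes the proof.

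The only subtlety worth flagging is the relationship between ``observability'' as defined for \emph{arbitrarily long} walks and the ``localize the root, then trace'' argument here: one should note that the walk $\W$ referred to in the definition of $t$-observability is any sufficiently long $t$-dimensional walk, and the argument above localizes the root and traces forward regardless of length, so no separate handling of long walks is needed. There is essentially no computational obstacle in this theorem — all the real work was done in Lemmas~\ref{lem:tracing2}, \ref{lem:pairing}, and~\ref{lem:upper2D}. I would therefore keep the proof to two or three sentences: state the color count, cite Lemma~\ref{lem:upper2D} to fix the root, cite the Tracing Lemma to recover the current node, and conclude.

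\begin{proof}
Color $G(n,2)$ using {\sc Color2}. This uses four blocks of $\rootnf$ colors, one block
for each edge direction, for a total of $4\rootnf = O(n^{1/2})$ colors. Let $\W$ be any
$4$-dimensional walk in this graph. By Lemma~\ref{lem:upper2D}, the color sequence of $\W$
uniquely determines the position of its root node $u$. By the Tracing Lemma
(Lemma~\ref{lem:tracing2}), the color sequence also determines the direction of every edge
of $\W$, and hence, once $u$ is known, the embedding of all of $\W$ in the lattice; in
particular the current node of $\W$ is determined. Thus $G(n,2)$ is $4$-observable using
$O(n^{1/2})$ colors.
\end{proof}
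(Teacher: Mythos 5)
Your proof is correct and follows essentially the same route as the paper's: invoke Lemma~\ref{lem:upper2D} to localize the root from the color sequence, then apply the Tracing Lemma (Lemma~\ref{lem:tracing2}) to recover the embedding and hence the current node, with the $4\rootnf = O(n^{1/2})$ color count coming directly from {\sc Color2}. The only difference is that you state the color count explicitly, which the paper leaves implicit; no gaps.
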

\begin{proof}
By Lemma~\ref{lem:upper2D}, the color sequence of $W$ uniquely determines the position
of $W$'s minimum node (root) in the lattice. Once $u$ is localized, the tracing lemma can
construct the unique embedding of $W$ in $G(n,2)$, localizing all other nodes, including
the current node.
\end{proof}

\section{$t$-Observability of Directed Lattices}
\label{sec:upper-dir}

We now describe the general coloring scheme for $t$-observability of $G(n,d)$, for any
fixed $d$ and all $t \leq d$. The scheme uses a tool from combinatorial design, called
\emph{orthogonal array}. We first describe our orthogonal array construction, and then
explain its application to observability.

\subsection{Orthogonal Arrays}

Let $\s, t, d$ be positive integers, with $t \leq d$. The parameters $t$ and $d$ are
in fact the walk and the lattice dimensions, while $\s$ is chosen as the smallest
\emph{prime} larger than $n^{d/t}$. (There always exists a prime between $m$ and $2m$,
for any integer $m$, and therefore $\s < 2n^{d/t}$.)
Our choice of $\s$ ensures that $\s^t \geq n^d$, which will be useful later because
each row of the array is used to assign colors to the edges rooted at a distinct node
of $G(n,d)$. An \emph{$(\s, t, d)$-orthogonal array} $\A$ is
an array of size $\s^t \times d$ satisfying the following two properties:

\begin{itemize}	\advance\itemsep by -4pt
\item The entries of $\A$ are integers from the set $\{0, 1, \ldots, \s -1 \}$, and
\\
\item For any choice of $t$ columns in $\A$, the rows (ordered $t$-tuples) are unique.
\end{itemize}

Several methods for constructing orthogonal arrays are known~\cite{OA,bush}.
Our construction uses polynomials of degree less than $t$ over the Galois field
$GF(\s) = \{0, 1, \ldots, \s -1 \}$. In particular, consider the set of all
polynomials of degree less than $t$.  Each such polynomial can be written as
$\P(x) : \sum_{i=1}^t a_i x^{t-i}$, with coefficients $a_i$ ranging over $GF(\s)$.
(Throughout the paper, when we use the word polynomial, we always mean these polynomials
over $GF(\s)$.)
There are exactly $\s^t$ such polynomials, and we can order them using the
lexicographic order of their coefficients. More specifically, let
$(a_1, a_2, \ldots, a_t )$ and $(b_1, b_2, \ldots, b_t )$ be two $t$-tuples from
$GF(\s)$, and let $\P$ and $\P'$ denote the polynomials associated with these coefficients.
Then, $\P \;\pr\; \P'$ iff $(a_1, a_2, \ldots, a_t) \;\pr\; (b_1, b_2, \ldots, b_t)$
under the lexicographic order. In particular, the polynomial for $(0, 0, \ldots, 0)$ is
the first element in this order, and the polynomial for $(\s -1, \s -1, \ldots, \s -1 )$
the last.

We will frequently need this ordering of the polynomials, and so for ease of reference,
let us define \emph{$p$-index}, which gives the unique position of a polynomial in
the ordered list. Specifically, if a polynomial has coefficients $(a_1,a_2,\ldots,a_t)$,
then its $p$-index equals $\sum_{i=1}^t a_i \s^{t-i}$. We can now describe how to
construct our orthogonal array. The $(i,j)$ entry of the array is defined as follows:

\begin{equation}
\label{eq:OA}
\A_{ij} \:\:=\:\:  \P_i (j) \bmod \s ,
\end{equation}
where $\P_i (x)$ is the polynomial whose $p$-index is $i$, for $0 \leq i < \s^t$,
and the array entry is the evaluation of this polynomial at $x = j$ (mod $\s$), where
$1 \leq j \leq d$.
More precisely, if the $i$th polynomial has coefficients $(a_1, a_2, \ldots, a_t)$, then
$$\A_{ij}  \:=\:   a_1 j^{t-1} + a_2 j^{t-2} + \cdots + a_{t-1} j + a_t  \pmod{\s} $$

In Figure~\ref{fig:OA}, we show an example of the orthogonal array constructed by our
scheme for $(\s, 2, d)$.

\begin{figure}[htbp]
\begin{center}
\begin{tabular}{r|r|r|r|r|}
\multicolumn{1}{c}{index}&\multicolumn{1}{c}{$1$}&\multicolumn{1}{c}{$2$}&\multicolumn{1}{c}{$\dots$}&\multicolumn{1}{c}{$d$}\\ \cline{2-5}
$0$ &$0$ &$0$ &$\dots$&$0$\\ \cline{2-5}
$1$ &$1$ &$1$ &$\dots$&$1$\\ \cline{2-5}
$\dots$&$\dots$&$\dots$&$\dots$&$\dots$\\ \cline{2-5}
${\s-1}$&$\s-1$ &$\s-1$ &$\dots$&$\s-1$\\ \cline{2-5}
${\s}$ &$1$ &$2$ &$\dots$&$d$\\ \cline{2-5}
${\s+1}$ &$2$ &$3$ &$\dots$&$d+1$\\ \cline{2-5}
$\dots$&$\dots$&$\dots$&$\dots$&$\dots$\\ \cline{2-5}
${2\s-1}$ &$0$ &$1$ &$\dots$&$d-1$\\ \cline{2-5}
$\dots$&\multicolumn{4}{|c|}{$\dots$}\\ \cline{2-5}
${\s^2-\s}$ &$\s-1$ &$\s-2$ &$\dots$&$\s-d$\\ \cline{2-5}
${\s^2-\s+1}$ &$0$ &$\s-1$ &$\dots$&$\s-d+1$\\ \cline{2-5}
$\dots$&$\dots$&$\dots$&$\dots$&$\dots$\\ \cline{2-5}
${\s^2-1}$ &$\s-2$ &$\s-3$ &$\dots$&$\s-d-1$\\ \cline{2-5}
\end{tabular}
\end{center}
\caption[]{A $(\s, 2, d)$ orthogonal array. For any two columns, all rows (ordered tuples)
are distinct. The $i$th row entries are computed from the polynomial $a_1 x + a_2$,
where $i$ is the $p$-index associated with $(a_1,a_2)$. Row $0$ has all zeroes because it
belongs to the polynomial $0x + 0$, which evaluates to $0$ for $j=1,2, \ldots, d$.
The last row belongs to the polynomial $(\s -1 )x + (\s -1 )$, with
coefficient vector $(\s -1, \s -1)$.  Its first entry, for $j=1$, is $(\s -1) 1 + (\s -1)
\equiv 2(\s -1) \equiv \s -2 \pmod{\s}$.}
\label{fig:OA}
\end{figure}

The following lemma shows that the construction is valid.

\begin{lemma}	\label{lem:OA}
The array $\A$ constructed by Equation~(\ref{eq:OA}) is an orthogonal array.
\end{lemma}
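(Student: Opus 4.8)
The plan is to verify the two defining properties of an $(\s,t,d)$-orthogonal array directly from Equation~(\ref{eq:OA}). The first property — every entry of $\A$ lies in $\{0,1,\ldots,\s-1\}$ — is immediate, since $\A_{ij}$ is defined as a residue modulo $\s$. I also note at the outset that $\A$ has the right shape: there are exactly $\s^t$ coefficient vectors $(a_1,\ldots,a_t)\in GF(\s)^t$, hence exactly $\s^t$ polynomials of degree less than $t$ and $\s^t$ rows, with $d$ columns indexed by $j=1,\ldots,d$. So all the real work is in the second property: for any fixed set of $t$ columns $j_1<j_2<\cdots<j_t$ drawn from $\{1,\ldots,d\}$, the $\s^t$ rows restricted to those columns are pairwise distinct ordered $t$-tuples.

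I would establish this by a polynomial-interpolation (Reed--Solomon) argument by contradiction. Suppose rows $i\ne i'$ agree on the chosen $t$ columns, i.e. $\P_i(j_k)\equiv\P_{i'}(j_k)\pmod{\s}$ for $k=1,\ldots,t$. Let $Q=\P_i-\P_{i'}$, which is again a polynomial of degree less than $t$ with coefficients in $GF(\s)$. Because $\s$ is prime, $GF(\s)$ is a field, so if $Q$ is not identically zero it has at most $t-1$ roots in $GF(\s)$. But $Q$ vanishes at $j_1,\ldots,j_t$, and these are $t$ \emph{distinct} elements of $GF(\s)$ — here one uses that the integers $1,2,\ldots,d$ are pairwise incongruent modulo $\s$, which holds since $\s>n^{d/t}\ge n\ge d$ for $n$ large relative to the constant $d$. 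Hence $Q$ has $t$ distinct roots, forcing $Q\equiv 0$; then $\P_i$ and $\P_{i'}$ have the same coefficient vector, hence the same $p$-index, so $i=i'$, a contradiction. This proves the rows are distinct on any $t$ columns, and combined with the first property and the dimension count, $\A$ is an $(\s,t,d)$-orthogonal array.

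The only real subtlety — and the one point I would be careful to state explicitly — is the condition $d<\s$, which guarantees that the column labels $1,\ldots,d$ are genuine distinct elements of $GF(\s)$, so that ``agreeing on $t$ columns'' translates to ``$Q$ has $t$ distinct roots.'' This follows from the choice of $\s$ as the smallest prime exceeding $n^{d/t}$ (with $n$ large compared to the fixed constant $d$); I would fold a one-line remark about this into the proof. Everything else is the standard fact that a nonzero degree-$<t$ polynomial over a field has fewer than $t$ roots, so I do not expect any further obstacle.
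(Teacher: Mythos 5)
Your proof is correct and follows essentially the same route as the paper's: assume two rows agree on $t$ columns, form the difference polynomial of degree at most $t-1$ over $GF(\sigma)$, and derive a contradiction from its having $t$ distinct roots. Your explicit verification that the column labels $1,\ldots,d$ are genuinely distinct elements of $GF(\sigma)$ (i.e., $d<\sigma$) is a detail the paper leaves implicit, and is a worthwhile addition rather than a departure in approach.
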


\begin{proof}
The array has dimensions $\s^t \times d$, and its entries come from the set
$\{0, 1, \ldots, \s - 1\}$, by construction. Thus, we only need to show that within
any $t$ columns of $\A$, all rows are distinct. We prove this by contradiction.
Let $j_1,j_2,\dots,j_t$, for $1 \leq j_1 < j_2 < \dots < j_t \leq d$ be any $t$ columns
of $\A$, and suppose that two different rows with indices $i_1$ and $i_2$, for
$i_1 < i_2$ are identical over these columns.
Let $(a_1, a_2, \ldots, a_t)$ and $(b_1, b_2, \ldots, b_t)$ denote the coefficients
corresponding to polynomials used for rows $i_1$ and $i_2$, respectively. Then,
by Equation~(\ref{eq:OA}), the polynomial used to construct entries of row $i_1$ is
$\P_{i_1} : a_1 x^{t-1} + a_2 x^{t-2} + \cdots + a_t$, and the polynomial used to
construct entries of row $i_2$ is $\P_{i_2} : b_1 x^{t-1} + b_2 x^{t-2} + \cdots + b_t$.
If these rows are identical, then we must have
$\P_{i_1} (j_k) \equiv \P_{i_2} (j_k) \pmod{\s}$, for $k=1, 2, \ldots, t$.  This implies that
$j_1, j_2, \ldots, j_t$ are $t$ distinct roots of the equation
$\P_{i_1} (x) - \P_{i_2} (x) \pmod{\s}$, which is not possible since this polynomial
has degree $t-1$ and at most $t-1$ distinct roots.\footnote{%
        Finite fields belong to unique factorization domains, and therefore a polynomial of
        order $r$ over finite fields has a unique factorization, and at most $r$ roots.}
Therefore, the rows $i_1$ and $i_2$ are not identical over the chosen $t$ columns, proving
that $\A$ is an orthogonal array.
\end{proof}

Note that not all orthogonal arrays could help us on coloring. Here we carefully constructed $\A$ such
that the regular structure of $\A$ helps us map colors to edges of the lattice graph in such
a way that a small number of appropriate colors can be used to determine the position of
a node in the lattice.

\subsection{Color Assignment using the Orthogonal Array}

We begin by indexing the nodes of $G(n,d)$ in the lexicographic rank order of their
coordinates. Specifically, a node $u$ with coordinates $(x_1,x_2,\ldots,x_d)$ has
\emph{node rank} $\R(u) = \sum_{i=1}^d  x_i n^{d-i}$, where recall that each
$x_j \in \{0, 1, \ldots n-1 \}$.
This ordering assigns rank $0$ to the origin $(0,0, \ldots, 0)$, and rank
$n^d -1 \:=\: (n-1) \sum_{i=1}^d n^{d-i}$ to the anti-origin $(n-1, n-1, \ldots, n-1)$.
Our orthogonal array $\A$ satisfies $\s^t \geq n^d$, and thus we can uniquely associate
the node with rank $i$ to the $i$th row of $\A$.

Let $u$ be a node of $G(n,d)$ whose rank (lexicographic order) is $\R(u) = i$, where
$0 \leq i < n^d$. Then, we use the $i$th row of $\A$ to assign to colors to the edges
rooted at $u$.
The rules for assigning colors are described in the following algorithm.  The algorithm uses
$2^t$ groups of disjoint colors $C_0, C_1, \ldots, C_{2^t -1}$, each with $2d \times \sigma$
colors. Each edge's color depends on its orientation, so we assign integers $1, 2, \ldots, 2d$
to the $2d$ orientations. (Any such labeling will suffice but, for the sake of concreteness,
we may number the $j$-up orientation as $j$ and the $j$-down orientation as $j+d$,
for $1 \leq j \leq d$.)

\cut{
\paragraph{{\sc ColorD}$(u)$:}

\begin{itemize}	\advance\itemsep by -4pt

\item Let $(a_1, a_2, \ldots, a_t )$ be the unique coefficient vector of a polynomial whose
	$p$-index $i$ equals the rank of $u$, namely, $i = \R(u)$.

\item Let $m \;=\; \sum_{k=1}^t (a_k \bmod 2) 2^{t-k}$.

\item If $e$ is the outgoing \emph{up} edge of $u$ in the $j$th dimension, then give
        it color \\
        $(A_{ij} \div  \rootsigma) \:+\: 2(j-1) \rootsigma$ from the color group $C_m$.

\item If $e$ is the incoming \emph{down} edge of $u$ in the $j$th dimension, then give
        it color \\
        $(A_{ij} \bmod \rootsigma) \:+\: 2(j-1) \rootsigma \:+\: \rootsigma$ from the color group $C_m$.
\end{itemize}
}

\bigskip

\paragraph{{\sc ColorD}$(u)$:}

\begin{itemize} 

\item Let $(a_1, a_2, \ldots, a_t )$ be the unique coefficient vector of a polynomial whose
        $p$-index $i$ equals the rank of $u$, namely, $i = \R(u)$.

\item Let $m \;=\; \sum_{k=1}^t (a_k \bmod 2) 2^{t-k}$.

\item If $e$ has orientation $j$, then give it color $A_{ij} \:+\: (j-1) \sigma$ from
	the color group $C_m$.

\end{itemize}

\bigskip

The use of disjoint color groups $C_m$ is required to allow unique decoding of the
position of a node in the lattice from the edge colors of
the walk. (These groups are used critically in the proof of Lemma~\ref{lem:rank2A}.)
The following two lemmas follow easily from the color assignment.

\begin{lemma}
{\sc ColorD} assigns colors to all the edges of $G(n,d)$, uses $O(n^{d/t})$
colors, and no two outgoing edges of a node receive the same color.
\end{lemma}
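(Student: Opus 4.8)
The statement bundles three routine verifications, each following directly from the structure of {\sc ColorD}; I would dispatch them in order, with essentially no new idea required.

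First, to see that every edge of $G(n,d)$ receives a color, recall that each edge $e = (w,w')$ is rooted at exactly one of its endpoints — the one that is smaller under $\pr$ — and that {\sc ColorD}$(u)$ assigns a color to every edge rooted at $u$ (one color per orientation present among those edges). Hence every edge is colored, and colored exactly once. The only point to check is that {\sc ColorD}$(u)$ is well defined for every node $u$, i.e.\ that the row of $\A$ indexed by $\R(u)$, and its coefficient vector $(a_1,\dots,a_t)$, actually exist: this holds because $0 \le \R(u) < n^d$ and, by the choice of $\s$ as the smallest prime exceeding $n^{d/t}$, we have $\s^t \ge n^d$; consequently the integer $m$ and all the colors in the rule are defined.

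Second, for the color count I would just multiply out the parameters. {\sc ColorD} draws colors from the $2^t$ pairwise disjoint groups $C_0,\dots,C_{2^t-1}$, and within a group the color of an edge is a value in $\{0,1,\dots,\s-1\}$ (the relevant array entry) shifted by one of the $2d$ orientation offsets $(j-1)\s$, so each group contributes at most $2d\,\s$ distinct colors. The total is therefore at most $2^t\cdot 2d\,\s$. Since $t \le d$ and $d$ is a constant, $2^t$ and $2d$ are constants; and the paper has already noted $\s < 2n^{d/t}$. Hence the number of colors is $O(\s) = O(n^{d/t})$.

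Third, for the claim that no two outgoing edges of a node share a color, I would use the orientation offset as the separating mechanism. A node $u$ has at most one outgoing edge in each of the $2d$ orientations, and the color assigned to an edge of orientation $j$ lies, inside its color group, in the set of integers in $[(j-1)\s,\, j\s)$, because the array entry contributes a value in $\{0,\dots,\s-1\}$ and the offset is exactly $(j-1)\s$. These blocks are pairwise disjoint over the $2d$ orientations, and distinct color groups are color-disjoint by construction; so two outgoing edges of $u$ with different orientations cannot get the same color. The one subtlety — and the step I would be most careful about — is that an outgoing edge of $u$ need not be rooted at $u$ (the $j$-down outgoing edges of $u$ are colored when {\sc ColorD} is invoked at the neighbor $u-e_j$), so the argument must range over several invocations of {\sc ColorD} at once; this is precisely why the orientation-indexed offset, rather than the color group $C_m$, does the work here — the groups $C_m$ are irrelevant for this claim and matter only for the decoding arguments later.
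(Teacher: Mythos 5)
Your proposal is correct; the paper states this lemma without proof (``follows easily from the color assignment''), and your verification is exactly the routine check the authors intend, including the one genuinely delicate point --- that the outgoing $j$-down edges of a node are rooted at a neighbor and colored by a different invocation of {\sc ColorD}, so it is the orientation offset $(j-1)\s$, not the color group, that separates the outgoing colors. The only nit is that for directed lattices the walk dimension satisfies $t \le 2d$ rather than $t \le d$, but this does not affect your count since $d$ is a constant.
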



\begin{lemma} {[$d$-Dimensional Tracing Lemma]} \label{lem:tracingd}
Let $G(n,d)$ be colored using the scheme {\sc ColorD}, and let $W$ be a walk in $G(n,d)$.
Fixing the position of any node of $W$ leads to a \emph{unique} embedding of $W$ in $G(n,d)$.
\end{lemma}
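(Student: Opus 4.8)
The plan is to follow the template of the two-dimensional Tracing Lemma (Lemma~\ref{lem:tracing2}): show that the colour of an edge encodes its orientation, and then reconstruct the whole walk step by step starting from the single node whose position is given.

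First I would verify that the colour of any edge determines its orientation $j\in\{1,2,\ldots,2d\}$. By {\sc ColorD}, an edge with orientation $j$ rooted at a node of rank $i$ receives the colour $A_{ij}+(j-1)\sigma$ from some group $C_m$. The groups $C_0,\ldots,C_{2^t-1}$ are pairwise disjoint, so the observed colour first identifies the group; within that group the colour index equals $A_{ij}+(j-1)\sigma$, and since $0\le A_{ij}<\sigma$ this index lies in the half-open interval $[(j-1)\sigma,\,j\sigma)$. These $2d$ intervals are disjoint, so $j$ is recovered as $1+\lfloor(\text{colour index})/\sigma\rfloor$. Hence from the colour alone we know both the coordinate axis of the edge and the sign along it: traversing an edge of orientation $j$ adds $+e_j$ to the current node when $j\le d$, and adds $-e_{j-d}$ when $d<j\le 2d$ (a walk in a directed graph follows each edge in the direction of its orientation). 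Here $e_k$ denotes the $k$th standard basis vector.

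Next, writing $W=(v_0,v_1,\ldots,v_L)$ with edge $e_\ell=(v_{\ell-1},v_\ell)$, let $\delta_\ell\in\{\pm e_1,\ldots,\pm e_d\}$ be the displacement read off the colour of $e_\ell$, so that $v_\ell=v_{\ell-1}+\delta_\ell$ for every $\ell$. If the position of $v_k$ is fixed, then $v_\ell=v_k+\sum_{m=k+1}^{\ell}\delta_m$ for $\ell>k$ and $v_\ell=v_k-\sum_{m=\ell+1}^{k}\delta_m$ for $\ell<k$; every node of $W$, and hence the entire embedding, is thereby uniquely determined.

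The argument is essentially bookkeeping, so I do not expect a genuine obstacle; the one point needing care is that the colour pins down the \emph{full} orientation (axis together with sign), not merely the axis — which is exactly what the additive offset $(j-1)\sigma$, combined with the bound $A_{ij}<\sigma$, is designed to guarantee. The colour groups $C_m$ play no role here (they matter only later, in Lemma~\ref{lem:rank2A}), but they do no harm since they are disjoint.
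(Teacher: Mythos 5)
Your proof is correct and matches the paper's approach: the paper leaves this lemma unproved (``follows easily from the color assignment''), and its proof of the two-dimensional Tracing Lemma (Lemma~\ref{lem:tracing2}) uses exactly your argument --- the color determines the edge's orientation because the offsets $(j-1)\sigma$ place distinct orientations in disjoint color ranges, and fixing one node then propagates deterministically along the walk. Your filled-in details, in particular the observation that $0\le A_{ij}<\sigma$ makes the intervals $[(j-1)\sigma, j\sigma)$ disjoint so that both axis and sign are recoverable, are exactly the right justification.
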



\begin{lemma}	\label{lem:difference}
Let $p_1, p_2$, respectively, denote the $p$-indices of polynomials with coefficient
vectors $(a_1, \ldots, a_t)$ and $(b_1, \ldots, b_t)$.
Then, the quantity $A_{p_1 j} - A_{p_2 j}$ can be uniquely calculated, for any $j$,
from the $t$ coordinate \emph{differences}, namely, $a_k-b_k$, for $k=1,2,\dots,t$.
\end{lemma}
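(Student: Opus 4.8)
The plan is to exploit the linearity of polynomial evaluation. Recall from Equation~(\ref{eq:OA}) that $A_{p_1 j} = \P_{p_1}(j) \bmod \s$, where $\P_{p_1}(x) = \sum_{k=1}^t a_k x^{t-k}$, and similarly $A_{p_2 j} = \P_{p_2}(j) \bmod \s$ with $\P_{p_2}(x) = \sum_{k=1}^t b_k x^{t-k}$. First I would observe that the difference polynomial $\P_{p_1}(x) - \P_{p_2}(x) = \sum_{k=1}^t (a_k - b_k) x^{t-k}$ depends only on the coefficient differences $a_k - b_k$. Then, since modular reduction commutes with addition, we have
$$ A_{p_1 j} - A_{p_2 j} \;\equiv\; \P_{p_1}(j) - \P_{p_2}(j) \;\equiv\; \sum_{k=1}^t (a_k - b_k)\, j^{t-k} \pmod{\s}. $$
Because each of $A_{p_1 j}, A_{p_2 j}$ lies in $\{0,1,\ldots,\s-1\}$, their difference lies in $\{-(\s-1),\ldots,\s-1\}$, so knowing the residue class modulo $\s$ together with this range constraint pins down the integer value $A_{p_1 j} - A_{p_2 j}$ uniquely. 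Hence the quantity is determined by the differences $a_k - b_k$ (and the fixed, known value of $j$), which is exactly the claim.

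The one point requiring a little care is the final disambiguation step: a residue modulo $\s$ does not by itself determine an integer, but the extra information that $A_{p_1 j}$ and $A_{p_2 j}$ are each genuine array entries — hence each in $\{0,\ldots,\s-1\}$ — confines the difference to an interval of length $2\s - 1$, which contains exactly one representative of each residue class. So the residue plus the interval membership gives a unique integer. I expect this interval-uniqueness observation to be the only substantive content; the rest is the routine algebra of pushing the reduction through the sum. No appeal to the orthogonal-array property (distinctness of rows) is needed here — only the explicit polynomial form of the entries.
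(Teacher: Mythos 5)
Your algebraic core --- reducing $A_{p_1 j}-A_{p_2 j}$ modulo $\s$ to $\sum_{k=1}^t (a_k-b_k)j^{t-k}$ by linearity of evaluation --- is exactly the paper's proof, which consists of this one congruence and nothing more. The problem is your final ``disambiguation'' step, which is genuinely wrong. The set $\{-(\s-1),\dots,\s-1\}$ contains $2\s-1$ integers, and an interval of that length does \emph{not} contain exactly one representative of each residue class modulo $\s$: for every nonzero residue $r$ both $r$ and $r-\s$ lie in it (for instance with $\s=5$ and residue $2$, the difference could be $2$ or $-3$). Only a window of exactly $\s$ consecutive integers has the uniqueness property you invoke. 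So the residue together with the range constraint does not pin down the integer value of $A_{p_1 j}-A_{p_2 j}$, and your attempted strengthening from ``determined mod $\s$'' to ``determined as an integer'' fails.

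Fortunately that strengthening is not needed. The lemma's wording (``uniquely calculated'') is loose, but every downstream use is modular: Lemma~\ref{lem:rank2A} explicitly concludes only with $A_{r_1 j}-A_{r_2 j} \pmod{\s}$, and in Theorem~\ref{thm:upperD} the unknown entry $\A_{r^* j_1}$ is recovered from a \emph{known} entry $\A_{r_1 j_1}$ and the difference mod $\s$; there the range constraint is applied to a single array entry, which does lie in a window of length exactly $\s$, so uniqueness holds at that point. If you delete your last paragraph and state the conclusion modulo $\s$, your proof coincides with the paper's and is correct.
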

\begin{proof}
By construction, $A_{p_1 j}-A_{p_2 j} \:\equiv\: \P_{p_1}(j) - \P_{p_2}(j)  \pmod{\s}
\:\equiv\: \sum_{k=1}^t (a_k-b_k)j^{t-k} \pmod{\s}$.
\end{proof}

If $p_1$ and $p_2$, where $p_1 > p_2$, are the $p$-indices of $(a_1, a_2, \ldots, a_t)$  and
$(b_1, b_2, \ldots, b_t)$, then by definition
$p_1 = \sum_{k=1}^t a_k \s^{t-k}$ and $p_2=\sum_{k=1}^t b_k \s^{t-k}$.
Let $\ell = (p_1 - p_2)$ be the distance between these $p$-indices, and
let $(c_1, c_2, \ldots, c_t)$, with $c_i \in GF(\s)$, be the coefficient vector
that yields the $p$-index $\ell$.
Since $\ell \:=\: \sum_{k=1}^t c_k \s^{t-k}$, we can easily find each
coefficient $c_k$ from $\ell$ by modular arithmetic:
\[ c_k \:=\: (\ell \bmod \s^{t-k+1}) \div \s^{t-k} . \]

The following two lemmas establish important properties of these coefficients,
and are key to inferring \emph{locations from distances}.

\begin{lemma}	\label{lem:parity}
Let $(a_1, a_2, \ldots, a_t )$ and $(b_1, b_2, \ldots, b_t )$ be coefficient vectors
with $p$-indices $p_1, p_2$, for $p_1 > p_2$, respectively. Let $\ell = p_1-p_2$, and suppose
$(c_1, c_2, \ldots, c_t )$ is the coefficient vector of the polynomial with $p$-index $\ell$.
Then, for each $k = 1,2, \ldots, t$, we either have $a_k-b_k \equiv c_k \pmod{\s}$, or we
have $a_k-b_k \equiv c_k+1 \pmod{\s}$.
\end{lemma}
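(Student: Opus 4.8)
The plan is to read the statement as a fact about ordinary base-$\sigma$ long subtraction with borrows. By definition of the $p$-index, $p_1$ and $p_2$ are exactly the base-$\sigma$ numbers whose digit strings (most significant first) are $(a_1,\dots,a_t)$ and $(b_1,\dots,b_t)$, and $(c_1,\dots,c_t)$ is the digit string of $\ell = p_1 - p_2$. In schoolbook subtraction, digit $c_k$ equals $a_k - b_k$ minus the borrow that enters position $k$, reduced modulo $\sigma$; since every borrow is either $0$ or $1$, this is precisely the claimed dichotomy. The whole task is therefore to make the ``borrow'' rigorous and to extract it from the closed form $c_k = (\ell \bmod \sigma^{t-k+1}) \div \sigma^{t-k}$ already established just above the lemma.

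Concretely, for $1 \le k \le t$ define the tails $A_k := \sum_{j=k}^t a_j \sigma^{t-j}$ and $B_k := \sum_{j=k}^t b_j \sigma^{t-j}$, with the convention $A_{t+1} = B_{t+1} = 0$. One checks immediately that $A_k = p_1 \bmod \sigma^{t-k+1}$ and $B_k = p_2 \bmod \sigma^{t-k+1}$, so $A_k, B_k \in [0,\sigma^{t-k+1})$ and
\[ \ell \bmod \sigma^{t-k+1} \;=\; (A_k - B_k) \bmod \sigma^{t-k+1}, \qquad A_k - B_k \in (-\sigma^{t-k+1},\, \sigma^{t-k+1}). \]
Now define the borrow $\beta_k := 1$ if $A_{k+1} < B_{k+1}$ and $\beta_k := 0$ otherwise; note $\beta_t = 0$, and $\beta_0 = 0$ since $p_1 > p_2$ forces $A_1 \ge B_1$. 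The heart of the argument is the identity
\[ c_k \;=\; (\ell \bmod \sigma^{t-k+1}) \div \sigma^{t-k} \;\equiv\; a_k - b_k - \beta_k \pmod{\sigma}, \qquad \text{with } \beta_k \in \{0,1\}. \]
To prove it, split on the sign of $A_k - B_k$. In either case substitute $A_k = a_k\sigma^{t-k} + A_{k+1}$ and $B_k = b_k\sigma^{t-k} + B_{k+1}$, perform the integer division by $\sigma^{t-k}$, and use that $A_{k+1}, B_{k+1} \in [0,\sigma^{t-k})$ implies $\lfloor (A_{k+1} - B_{k+1})/\sigma^{t-k} \rfloor \in \{-1, 0\}$ and in fact equals $-\beta_k$. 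This yields $c_k = a_k - b_k - \beta_k$ when $A_k \ge B_k$ and $c_k = a_k - b_k + \sigma - \beta_k$ when $A_k < B_k$; both give the displayed congruence.

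Granting the identity, the lemma follows at once: $a_k - b_k \equiv c_k + \beta_k \pmod{\sigma}$ and $\beta_k \in \{0,1\}$, so $a_k - b_k$ is congruent to $c_k$ or to $c_k + 1$ modulo $\sigma$. I expect the only real care needed is the bookkeeping in the two-case computation — keeping straight which quantities are reduced mod $\sigma^{t-k+1}$ versus divided by $\sigma^{t-k}$, and confirming the borrow genuinely lands in $\{0,1\}$ (which is exactly the bound $A_{k+1},B_{k+1} < \sigma^{t-k}$) — but there is no conceptual obstacle beyond recognizing the subtraction-with-borrow structure.
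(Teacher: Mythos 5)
Your proof is correct and is in essence the paper's own argument: both reduce the identity $\ell = p_1 - p_2$ modulo $\sigma^{t-k+1}$ and show that the base-$\sigma$ borrow entering position $k$ is $0$ or $1$ --- the paper by bounding the tail sum $\sum_{i>k}(c_i-a_i+b_i)\sigma^{t-i}$ strictly between $-\sigma^{t-k}$ and $2\sigma^{t-k}$, you by exhibiting the borrow explicitly as the indicator $\beta_k$ of $A_{k+1} < B_{k+1}$. Your version additionally identifies \emph{which} of the two alternatives occurs, which is slightly more than the lemma asserts but costs nothing.
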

\begin{proof}
Because $\ell = p_1-p_2$, we have $\ell = \sum_{k=1}^t (a_k - b_k) \s^{t-k}$.
Perform a $\pmod{\s^{t-k+1}}$ operation on both sides of the equality:
$(a_1-b_1)\s^{t-1}+\dots+(a_t-b_t) \;\equiv\; c_1\s^{t-1}+\dots+c_t$.
We get $(a_k-b_k)\s^{t-k}+\dots+(a_t-b_t) \;\equiv \; c_k\s^{t-k}+\dots+c_t
\pmod{\s^{t-k+1}}$, which is
equivalent to $(a_k-b_k-c_k)\s^{t-k} \;\equiv\; \sum_{i=k+1}^t (c_i-a_i+b_i) \s^{t-i}
\pmod{\s^{t-k+1}}$.
Because $a_i, b_i, c_i \in GF(\s)$, the right hand side clearly satisfies
the following bounds:
\[ -\s^{t-k} \:<\: \sum_{i=k+1}^t (c_i-a_i+b_i) \s^{t-i} \:<\: 2\s^{t-k} . \]
But since $a_k,b_k,c_k$ on the left side are positive integers, there are only two feasible
solutions: $\sum_{i=k+1}^t (c_i-a_i+b_i)\s^{t-i} \equiv 0  \pmod{\s^{t-k+1}}$
or $\sum_{i=k+1}^t (c_i-a_i+b_i)\s^{t-i} \equiv \s^{t-k} \pmod{\s^{t-k+1}}$.
Therefore, we must have either
$a_k-b_k-c_k \equiv 0 \pmod{\s}$, or $a_k-b_k-c_k \equiv 1 \pmod{\s}$.
This completes the proof.
\end{proof}

The following lemma shows that while we cannot reconstruct the unknown coefficient
vectors $(a_1, a_2, \ldots, a_t)$ and $(b_1, b_2, \ldots, b_t)$ from the edge colors,
we can still compute enough information about their entries in the orthogonal
array $\A$.

\begin{lemma}	\label{lem:rank2A}
Let $G(n,d)$ be colored using the scheme {\sc ColorD}, and let $W$ be a walk in $G(n,d)$.
Let $e_1$ and $e_2$ be two edges in this walk rooted, respectively, at $u_1$ and $u_2$
with ranks (lexicographic order) $\R(u_1) = r_1$ and $\R(u_2) = r_2$, with $r_1 < r_2$.
Then, the difference $r_1-r_2$ along with the colors of $e_1$ and $e_2$ are sufficient
to compute $A_{r_1 j} - A_{r_2 j} \pmod{\s}$.
\end{lemma}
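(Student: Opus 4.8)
The plan is to reduce the claim to the two preceding lemmas. Write $r_1 = \R(u_1)$ and $r_2 = \R(u_2)$ as the $p$-indices of the (unknown) coefficient vectors $(a_1,\dots,a_t)$ and $(b_1,\dots,b_t)$, so $A_{r_1 j} = \P_{r_1}(j) \bmod \s$ and $A_{r_2 j} = \P_{r_2}(j) \bmod \s$. By Lemma~\ref{lem:difference}, the quantity $A_{r_1 j} - A_{r_2 j} \pmod{\s}$ depends only on the $t$ differences $a_k - b_k \pmod{\s}$. Since we are given $\ell = r_2 - r_1 > 0$, we can extract the coefficient vector $(c_1,\dots,c_t)$ of the polynomial with $p$-index $\ell$ via the modular formula $c_k = (\ell \bmod \s^{t-k+1}) \div \s^{t-k}$. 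By Lemma~\ref{lem:parity}, for each $k$ we have $b_k - a_k \equiv c_k \pmod{\s}$ or $b_k - a_k \equiv c_k + 1 \pmod{\s}$; equivalently $a_k - b_k \equiv -c_k$ or $-c_k - 1 \pmod \s$. So each difference $a_k-b_k$ is pinned down up to a single unknown \emph{carry bit} $\epsilon_k \in \{0,1\}$, and the whole problem becomes: recover the bit vector $(\epsilon_1,\dots,\epsilon_t)$ from the colors of $e_1$ and $e_2$.

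This is exactly where the color groups $C_0,\dots,C_{2^t-1}$ enter. Recall that {\sc ColorD} places all edges rooted at $u_s$ into the group $C_{m_s}$ where $m_s = \sum_{k=1}^t (a_k \bmod 2)\, 2^{t-k}$ for $s=1$ (with $a_k$) and the analogous $m_s$ for $s=2$ (with $b_k$). The color of $e_1$ reveals which group it lies in, hence reveals $m_1$, i.e. the parity bit $a_k \bmod 2$ for every $k$; similarly the color of $e_2$ reveals $b_k \bmod 2$ for every $k$. Therefore $(a_k - b_k) \bmod 2$ is known for every $k$. Now combine this with the two candidates from Lemma~\ref{lem:parity}: the two possibilities $-c_k$ and $-c_k-1$ differ by $1$ mod $\s$, hence (since $\s$ is odd, being an odd prime $> n^{d/t} \ge 2$) differ in parity mod $2$. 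Knowing the true parity of $a_k-b_k$ thus selects the correct one of the two candidates, so we recover $a_k - b_k \pmod{\s}$ exactly, for every $k=1,\dots,t$. Feeding these $t$ differences into Lemma~\ref{lem:difference} yields $A_{r_1 j} - A_{r_2 j} \pmod{\s}$, completing the argument.

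I would double-check two small points in the write-up. First, the parity-separation step genuinely needs $\s$ odd: if $\s$ were $2$ the two candidates would collapse, but our construction always chooses $\s$ to be a prime larger than $n^{d/t}$, and for any interesting regime this is an odd prime (and one can simply stipulate $\s \ge 3$). Second, one must be careful about the orientation of $\ell$: Lemma~\ref{lem:parity} is stated for $\ell = p_1 - p_2$ with $p_1 > p_2$, so here I take $\ell = r_2 - r_1$ and apply it with the roles of $a$ and $b$ swapped, which is why the recovered relation is for $b_k - a_k$; the sign bookkeeping when passing back to $A_{r_1 j} - A_{r_2 j}$ is routine but worth writing out once. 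The main obstacle — really the crux of the whole scheme — is the parity argument: it is the only place that explains \emph{why} the $2^t$ disjoint color groups are needed, namely to resolve the carry ambiguity left open by Lemma~\ref{lem:parity}. Once that is in place, the rest is assembling already-proved lemmas.
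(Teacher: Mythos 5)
Your argument is essentially the paper's own proof: reduce via Lemma~\ref{lem:difference} to the digit differences, use Lemma~\ref{lem:parity} to cut each $a_k-b_k \pmod{\s}$ down to two candidates, and read the parities $a_k \bmod 2$ and $b_k \bmod 2$ off the color groups $C_m$ to resolve the remaining bit; your added care about the orientation of $\ell$ and the oddness of $\s$ is if anything tidier than the paper's write-up. The one ingredient you drop is the paper's appeal to the sign of $a_k-b_k$ (``we can infer from the embedding whether $a_k>b_k$''), and it is not purely decorative: the integer $a_k-b_k$ ranges over $[-(\s-1),\s-1]$, and a negative value differs from the canonical representative of its residue class by the odd number $\s$ and so has the \emph{opposite} parity, which means the parity bit alone does not distinguish the candidate residues $c_k$ and $c_k+1$ --- the sign of $a_k-b_k$ is needed as well, exactly as in the paper's version of this step.
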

\begin{proof}
Suppose that the colors of $e_1$ and $e_2$ belongs to color groups $C_{m_1}$ and $C_{m_2}$,
respectively. Let us consider the binary representations of $m_1, m_2$, namely,
$m_1 = \sum_{k=1}^t (a_k \bmod 2) 2^{t-k-1}$ and $m_2 = \sum_{k=1}^t (b_k \bmod 2) 2^{t-k-1}$.
If $(a_k - b_k) \equiv 0 \pmod{2}$, then by Lemma~\ref{lem:parity} we can conclude
that $(a_k - b_k) \equiv c_k \pmod{\s}$ if $c_k$ is even; otherwise it equals $c_k +1$.
Similarly, $(a_k - b_k) \equiv 1 \pmod{2}$ tells us that
$(a_k - b_k) \equiv c_k \pmod{\s}$ if $c_k$ is odd (and $c_k +1$ otherwise).
Because we can infer from the embedding whether $a_k > b_k$, we can calculate
$a_k - b_k$ from $a_k - b_k \pmod{\s}$.
Once we know all $a_k - b_k$, for $k=1,2, \ldots, t$ we can
compute $A_{r_1 j} - A_{r_2 j} \pmod{\s}$
using Lemma~\ref{lem:difference}.
\end{proof}

\begin{lemma} {[Ranking Lemma}] 	\label{lem:rank}
Let a lattice graph $G(n,d)$ be colored using the algorithm {\sc ColorD}, and let
$u_1, u_2$ be two nodes in a walk $W$ of $G(n,d)$. We can calculate the difference of
their ranks $\R(u_1) - \R(u_2)$ from just the color sequence of $W$.
\end{lemma}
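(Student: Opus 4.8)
The plan is to reduce the global problem — computing $\R(u_1) - \R(u_2)$ for two arbitrary nodes of the walk — to the local differences in individual coordinates, which the walk's embedding already exposes. Recall that the node rank is $\R(u) = \sum_{i=1}^d x_i n^{d-i}$, so if $u_1 = (x_1,\ldots,x_d)$ and $u_2 = (x'_1,\ldots,x'_d)$, then $\R(u_1) - \R(u_2) = \sum_{i=1}^d (x_i - x'_i) n^{d-i}$. Thus it suffices to recover each coordinate difference $x_i - x'_i$ from the color sequence alone. The first step is to invoke the $d$-Dimensional Tracing Lemma (Lemma~\ref{lem:tracingd}): the color sequence determines the orientation of every edge of $W$, hence the entire shape (trace) of the embedding up to translation. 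In particular, for any two nodes $u_1, u_2$ that appear in $W$, the displacement vector $u_1 - u_2$ is already fully determined by the color sequence — tracing the walk from $u_1$ to $u_2$ and summing the signed unit steps in each dimension gives exactly $x_i - x'_i$ for every $i$. Once we have all $d$ coordinate differences, we plug them into $\sum_i (x_i - x'_i) n^{d-i}$ and we are done.

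The key observation making this work is that we do not need to know the \emph{absolute} positions of $u_1$ and $u_2$ (which is the harder decoding problem handled elsewhere), only their \emph{relative} position, and the Tracing Lemma hands us that relative position for free: the colors encode directions, the walk is a fixed sequence of directed steps, so any two of its vertices have a determined coordinate-wise offset. So the proof is essentially: (i) by Lemma~\ref{lem:tracingd}, the color sequence fixes the embedding of $W$ up to translation; (ii) therefore for nodes $u_1, u_2$ of $W$, the vector $(x_i - x'_i)_{i=1}^d$ is computable from the colors by walking along $W$ from $u_2$ to $u_1$ and accumulating signed steps per axis; (iii) substitute into the definition of node rank to get $\R(u_1) - \R(u_2) = \sum_{i=1}^d (x_i - x'_i) n^{d-i}$.

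I do not expect a serious obstacle here; the lemma is a bookkeeping consequence of the Tracing Lemma rather than of the orthogonal-array machinery. The only point requiring a little care is that $u_1$ and $u_2$ must genuinely be nodes \emph{of the walk} $W$ (so that one can trace between them within $W$); the hypothesis states exactly this, so there is no gap. One should also note that the sign of each $x_i - x'_i$ comes out of the trace automatically (as a signed sum of $+1$'s and $-1$'s), which is precisely the ingredient later lemmas such as Lemma~\ref{lem:rank2A} rely on when they say ``we can infer from the embedding whether $a_k > b_k$.'' In short, the whole argument is: orientations are visible in colors, hence relative displacements are visible, hence rank differences are visible.
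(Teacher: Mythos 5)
Your proposal is correct and matches the paper's own argument: the paper likewise traces the walk from $u_1$ to $u_2$ using the fact that colors determine edge orientations, counts the $j$-up and $j$-down edges ($\alpha_j$ and $\beta_j$) to get each coordinate difference, and substitutes into the rank formula to obtain $\R(u_1)-\R(u_2)=\sum_{j=1}^d(\beta_j-\alpha_j)n^{d-j}$. No differences worth noting.
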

\begin{proof}
Using the colors of edges in $W$ starting at $u_1$, we can trace the walk, and count
the number of edges in each dimension. Suppose the number of $j$-up and $j$-down edges
included in the walk from $u_1$ to $u_2$ are $\alpha_j$ and $\beta_j$, respectively.
Then, the $j$th coordinate of $u_2$ differs from that of $u_1$ by precisely
$\alpha_j - \beta_j$. We can, therefore, calculate
$\R(u_1) - \R(u_2) \:=\: \sum_{j=1}^d (\beta_j-\alpha_j) n^{d-j}$.
\end{proof}

We are now ready to prove our main theorem.

%

\begin{theorem} \label{thm:upperD}
$O(n^{d/t})$ colors suffice for $t$-observability of any directed lattice graph $G(n,d)$,
for any fixed dimension $d$ and $t \leq 2d$.
\end{theorem}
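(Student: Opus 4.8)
The plan is to prove that the coloring scheme {\sc ColorD}, which uses $O(n^{d/t})$ colors, makes $G(n,d)$ $\,t$-observable. Fix an arbitrary $t$-dimensional walk $W$ together with its observed color sequence. By the $d$-Dimensional Tracing Lemma (Lemma~\ref{lem:tracingd}) the colors already reveal the orientation of every edge of $W$, so it suffices to recover the lattice position of a \emph{single} node of $W$; the rest of the embedding, in particular the current node, then follows. Since $W$ is $t$-dimensional it contains edges $e_1,\dots,e_t$ realizing $t$ distinct orientations, and these realize $t$ distinct columns $j_1,\dots,j_t$ of the orthogonal array $\A$. Let $u_k$ be the root of $e_k$ and $\R(u_k)$ its lexicographic rank; I will pin down $\R(u_1)$, and hence the coordinates of $u_1$.

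The decoding proceeds in three moves. First, trace $W$ from $u_1$ and count the up/down edges in each dimension; the Ranking Lemma (Lemma~\ref{lem:rank}) then yields every rank difference $\R(u_1)-\R(u_k)$ from the color sequence alone. Second, from the color of each $e_k$ read off directly its column $j_k$, its color group $C_{m_k}$, and the array entry $A_{\R(u_k),j_k}$ (the color minus the orientation offset). These entries lie in $t$ \emph{different} rows of $\A$, so they do not yet identify a node; the third move synchronizes them. Using the rank difference $\R(u_1)-\R(u_k)$ and the color groups of $e_1$ and $e_k$, Lemma~\ref{lem:rank2A} (which internally invokes Lemmas~\ref{lem:parity} and~\ref{lem:difference}) computes $A_{\R(u_1),j_k}-A_{\R(u_k),j_k}\bmod\s$; adding the already-known $A_{\R(u_k),j_k}$ recovers $A_{\R(u_1),j_k}$, for each $k=1,\dots,t$. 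We thus know row $\R(u_1)$ of $\A$ restricted to the $t$ distinct columns $j_1,\dots,j_t$, so by the defining orthogonal-array property (Lemma~\ref{lem:OA}) the rank $\R(u_1)$ is uniquely determined; inverting $\R(u)=\sum_i x_i n^{d-i}$ gives $u_1$'s coordinates, and a final application of the Tracing Lemma localizes all of $W$. Since $d$ (hence $t\le 2d$) is constant, {\sc ColorD} uses $2^{t}\cdot 2d\cdot\s=O(\s)=O(n^{d/t})$ colors, which is the claimed bound.

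I expect the only genuine obstacle to lie in the synchronization move, i.e.\ in Lemma~\ref{lem:rank2A} and the two number-theoretic lemmas behind it: the whole point is that from rank \emph{differences} plus the parity bits carried by the color groups one can translate array entries observed across several rows into entries anchored at one common row. That is precisely why {\sc ColorD} partitions the palette into the $2^{t}$ groups $C_m$ and why Lemma~\ref{lem:parity} narrows each coefficient difference to one of just two residues modulo $\s$. Granting those lemmas, the proof of Theorem~\ref{thm:upperD} itself reduces to two routine checks — that the $t$ orientations of $W$ genuinely index $t$ distinct columns of $\A$ so that Lemma~\ref{lem:OA} applies, and that the choice of anchor node $u_1$ is irrelevant (any node works, by the Tracing Lemma) — neither of which poses difficulty. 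In contrast with the two-dimensional warm-up of Lemma~\ref{lem:upper2D}, this argument never needs the Pairing Lemma.
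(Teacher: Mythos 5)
Your proposal is correct and follows essentially the same route as the paper's proof: read an array entry $A_{\R(u_k),j_k}$ off each of the $t$ differently oriented edges, use the Ranking Lemma for rank differences, invoke Lemma~\ref{lem:rank2A} to translate all entries to one common row, and then apply the orthogonal-array property to identify that row. The only (immaterial) difference is that you anchor at the root of one edge $e_1$ rather than at the walk's root node $u^*$, and your closing observations — that the Pairing Lemma is no longer needed and that the crux lies in Lemma~\ref{lem:rank2A} — match the paper's own remarks.
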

\begin{proof}
We color the graph $G(n,d)$ using {\sc ColorD}, and show how to localize a walk $W$
of dimension $t$. Suppose $j_1,j_2,\dots,j_t$, where $1\leq j_1 < j_2<\dots<j_t\leq 2d$,
are the $t$ distinct edge orientations of $W$. Let $e_1$ be a $j_1$-oriented edge in $W$,
rooted at a node $u_1$, and let $\R(u_1)=r_1$ be the rank of $u_1$. Then, the color of
$e_1$ can be used to uniquely determine the value $\A_{r_1 j_1}$.

Now, suppose the root of the walk $\W$ is the node $u^*$, with $r^* = \R(u^*)$.
We can compute the difference $r_1 - r^*$, by using the Ranking Lemma.
By Lemma~\ref{lem:rank2A}, this difference along with the color sequence of $W$
gives us $\A_{r_1 j_1}-\A_{r^* j_1}$, from which we can calculate $\A_{r^* j_1}$.
By repeating this argument for each of the $t$ dimensions spanned by $W$, we can
calculate $\A_{r^* j_1}, \A_{r^* j_2}, \ldots, \A_{r^*, j_t}$. By the property
of orthogonal arrays, the ordered $t$-tuple $(\A_{r^* j_1}, \A_{r^* j_2},\dots, \A_{r^* j_t})$
is unique in $\A$. Therefore, the rank of the root node $u^*$ can be uniquely determined,
which in turn uniquely localizes $u^*$ in the lattice graph $G(n,d)$. This completes the proof.
\end{proof}

\cut{
\begin{proof}
We color the graph $G(n,d)$ using {\sc ColorD}, and show how to localize a
walk $W$ of cycle dimension $t$. Without loss of generality, let
$j_1,j_2,\dots,j_t$, where $1\leq j_1 < j_2<\dots<j_t\leq d$, be the dimensions
spanned by the cycles of $W$. Let $e_1$ be a $j_1$-dimensional \emph{up-edge} in one
of the cycles of $W$. Then, by the Pairing Lemma, there also exists a $j_1$-down edge,
say, $e_2$ in $W$. (We can find these paired edges for each dimension by a simple
linear-time scan of the walk.)
Suppose that the edges $e_1,e_2$ are
rooted at nodes $u_1$ and $u_2$, respectively, whose ranks are $r_1 = \R (u_1)$ and
$r_2 = \R(u_2)$, and $r_1 > r_2$.
(The proof is similar when $e_1$ is a down edge or $r_2 > r_1$.)
Then, by the Ranking Lemma, we can calculate
the difference $(r_1-r_2)$, by tracing the color sequence of $W$ from $u_1$ to $u_2$.
By Lemma~\ref{lem:rank2A}, this rank difference $r_1-r_2$ together with the color
assignment of $W$ lets us calculate the quantity $\A_{r_1 j_1} - \A_{r_2 j_1}$.
By combining these rank differences from $t$ different direction pairs, we will show
how to uniquely determine the rank of the root node $u$ of $W$.

Let $\D = \A_{r_1 j_1} - \A_{r_2 j_1}$.
Write $\A_{r_1 j_1}=\alpha_1\rootsigma+\beta_1$, $\A_{r_2 j_1}=\alpha_2\rootsigma+\beta_2$
and $\D = \alpha\rootsigma+\beta$, where $0\leq \alpha_i, \beta_i, \alpha, \beta < \rootsigma$.
First, we can calculate $\alpha$ and $\beta$ from $\D$ as follows:
$\alpha = \D \div \rootsigma$ and $\beta = \D \mod \rootsigma$.
We can write the color assigned to the edge $e_1$ by {\sc ColorD}
as $(\A_{r_1j_1} \div \rootsigma + 2(j_1-1) \rootsigma = \alpha_1 + 2(j_1-1) \rootsigma$.
From this we can calculate $\alpha_1$ with a $\pmod{\rootsigma}$ operation.
Similarly, the color of $e_2$ is $(\A_{r_2j_1} \mod \rootsigma + 2(j_1-1) \rootsigma +
\rootsigma= \beta_2 + 2j_1-1 \rootsigma$, from which we can calculate $\beta_2$.
Now, since $\A_{r_1 j_1} = \D + \A_{r_2 j_1}$, we have the following equality:
$\alpha_1\rootsigma+\beta_1 \:=\: \alpha_2\rootsigma+\beta_2+\alpha\rootsigma+\beta$.
From this equation, we can calculate $\beta_1$ by performing a $\pmod{\rootsigma}$ operation
on both sides.
We therefore have $\beta_1 \equiv  \beta_2+\beta \pmod{\rootsigma}$.
Once we know both $\alpha_1$ and $\beta_1$, we have $\A_{r_1 j_1}$.

Now, suppose the rank of the root node $u$ is $r = \R(u)$. By applying the Ranking Lemma
again, we can compute the difference $r_1 - r$. By Lemma~\ref{lem:rank2A}, this difference
along with the color sequence of $W$ gives us $\A_{r_1 j_1}-\A_{r j_1}$, from which
we can calculate $\A_{rj_1}$.
By repeating this argument for each of the $t$ dimension spanned by $W$, we can
calculate $\A_{r j_1}, \A_{r j_2}, \ldots, \A_{r, j_t}$. By the property of orthogonal arrays,
the ordered $t$-tuple $(\A_{r j_1}, \A_{r j_2},\dots, \A_{r j_t})$ is unique in $\A$.
Therefore, the rank of the root node $u$ is uniquely determined, which in turn
uniquely localizes $u$ in the lattice graph $G(n,d)$. This completes the proof.
\end{proof}
}

\section{Observability in Undirected Lattices}
\label{sec:Undir}

Observing a walk is more complicated in undirected graphs because edge colors fail
to determine the direction of the walk. An undirected edge $e = (u,v)$ may be traversed
in either direction by the walk, but reveals the same color. We say that a walk has
dimensions $t$ if it contains edges parallel to $t$ distinct dimensions. A minor
modification of the lower bound construction (Theorem~\ref{thm:lb}) shows that $\Omega (n^{t/d})$
colors are necessary for $t$-observability of undirected lattices: there are $(n/2)^d$
undirected paths, each of length $t \leq d$, and $k$ colors can disambiguate at most $k^t$
paths, giving the lower bound. We now prove a matching upper bound for $t$-observability.
As we mentioned earlier, there is one trivial walk that cannot be observed
in undirected graphs: a walk that traverses a single edge back and forth.
In this case, the current node cannot be determined from the color sequence.
However, we show below that walks that include at least two distinct edges
can always be observed.

\subsection{Signs of Undirected Edges and an Auxiliary Coloring}

Although the edges of $G(n,d)$ are undirected, a walk imposes an orientation on the
edges it visits. To exploit this induced directionality, we introduce the notion of a
\emph{sign}. First, we consider a walk as a \emph{sequence of nodes} $W = (u_0, u_1,
\ldots, u_\l )$, where $e_i = (u_{i-1}, u_i )$ is the $i$th edge in the walk.
Thus, the observed sequence is the colors of edges $e_1, e_2, \ldots, e_\l$.
Next, recall that a node $u$ with coordinates $(x_1,x_2,\ldots,x_d)$ has \emph{node rank}
$\R(u) = \sum_{i=1}^d  x_i n^{d-i}$.
Now, consider an edge $e_i = (u_{i-1}, u_i)$, and assume that it is parallel to the
dimension $j$. Then, it is easy to see that $\R(u_i)-\R(u_{i-1})=\pm n^{d-j}$.
We say that the \emph{sign} of the edge $e_i$ is \emph{positive} if
$\R(u_i)-\R(u_{i-1}) = +n^{d-j}$, and \emph{negative} otherwise.
(Intuitively, the sign is positive if the walk traverses the edge in the positive direction
of the axis, and negative otherwise.)

We first show that the signs of the edges in a walk can be computed from a simple
3-coloring. Let $o=(0,0,\ldots,0)$ be the origin and $u=(x_1,x_2,\ldots,x_d)$ be a
node of $G(n,d)$. Let $d_o(u) = \sum_i^d x_i$ be the length of the shortest path from
the origin to $u$. Then, {\sc mod3$(o)$} colors the edges of $G(n,d)$ as follows.
See Figure~\ref{fig:fig4} for an example.

\begin{quote}
	\emph{Assign color $m \;=\; d_o(u) \bmod 3$ to all the edges rooted at node $u$. }
\end{quote}

\begin{figure}[htbp]
\centering
\includegraphics[angle=0, width=0.25\textwidth]{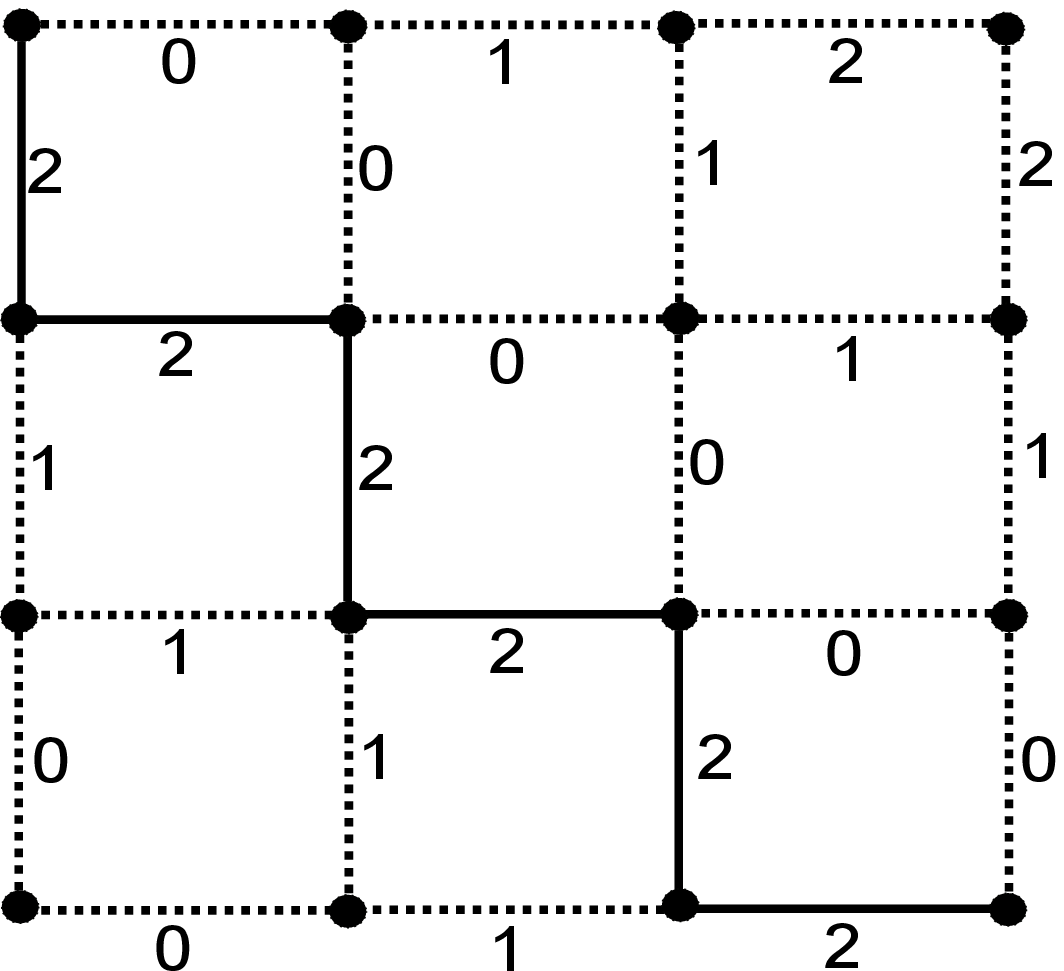}
\caption{{\sc mod3} coloring. A monochromatic walk using color 2 is shown as solid.}
\label{fig:fig4}
\end{figure}

\begin{lemma} \label{lem:trace1}
Let $G(n,d)$ be 3-colored using {\sc mod3$(o)$}, and let $W = (u_0, u_1, \ldots, u_\l )$ be
a walk with at least two distinct observed colors. Then we can compute the
sign of all $(u_{i-1}, u_i)$, for $i=1, 2, \ldots, \l$.
\end{lemma}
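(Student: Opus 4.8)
The plan is to exploit the fact that under the {\sc mod3} coloring, the color of every edge rooted at $u$ equals $d_o(u) \bmod 3$, and that $d_o$ changes by exactly $\pm 1$ across every edge; combined with the fact that the sign of an edge is determined by whether the walk moves "away from" or "toward" the origin along the relevant axis, this means the sign of an edge can be read off from the \emph{relative} colors of consecutive edges in the walk, once the walk has two distinct colors to break the symmetry. First I would set up the key observation: if $e_i = (u_{i-1}, u_i)$ is parallel to axis $j$, then $d_o(u_i) - d_o(u_{i-1}) = +1$ iff the $j$th coordinate increases iff $\R(u_i) - \R(u_{i-1}) = +n^{d-j}$, i.e., iff $e_i$ is positive. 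So the sign of $e_i$ is precisely the sign of $d_o(u_i) - d_o(u_{i-1})$, and it suffices to recover the sequence $d_o(u_0), d_o(u_1), \ldots, d_o(u_\l)$ up to an additive constant — or even just the sequence of its consecutive differences.

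Next I would observe what the color sequence actually tells us. The color of $e_i$ is $d_o(\mathrm{root}(e_i)) \bmod 3$, where $\mathrm{root}(e_i)$ is whichever of $u_{i-1}, u_i$ has smaller rank, equivalently (for an axis-$j$ edge) smaller $j$th coordinate, equivalently smaller $d_o$-value among the two. Hence $\mathrm{color}(e_i) \equiv \min\{d_o(u_{i-1}), d_o(u_i)\} \pmod 3$. Since the two endpoints' $d_o$-values differ by exactly $1$, knowing $\min$ and knowing the two values differ by one does not by itself tell us which endpoint is which — but consider three consecutive edges, or rather the transition between two consecutive edges sharing node $u_i$. At node $u_i$ the value $d_o(u_i)$ is shared by $e_i$ and $e_{i+1}$. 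The color of $e_i$ is $\min\{d_o(u_{i-1}), d_o(u_i)\} \bmod 3$ and the color of $e_{i+1}$ is $\min\{d_o(u_i), d_o(u_{i+1})\} \bmod 3$. Writing $a_i = d_o(u_i)$, the walk is a lattice path in the integers with steps $\pm 1$, and the color of the $i$th edge is $\min(a_{i-1}, a_i) \bmod 3 = (a_i - [\text{step } i \text{ is up}]) \bmod 3$ where an "up" step means $a_i = a_{i-1}+1$. The sequence of colors therefore constrains the walk $(a_i)$ on $\mathbb{Z}$ modulo $3$, and the constraint is tight enough — because consecutive colors differ by $0$ or $\pm 1 \bmod 3$ in a pattern that distinguishes a local max, local min, or monotone passage — to pin down all the step directions \emph{provided} we can orient at least one step; the hypothesis that $W$ has two distinct colors is exactly what guarantees the walk is not confined to a single edge and hence contains a step we can orient by comparing its two distinct endpoint-colors.

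Concretely, I would argue: pick an index $i$ where $\mathrm{color}(e_i) \neq \mathrm{color}(e_{i+1})$ (such $i$ exists since not all colors are equal and colors change only at shared nodes — actually I should be a little careful and instead pick two edges of the walk with different colors and use the fact that along the walk between them some consecutive pair differs). For that consecutive pair sharing node $u_i$, the three quantities $\min(a_{i-1},a_i) \bmod 3$, $a_i \bmod 3$ is unknown, $\min(a_i, a_{i+1}) \bmod 3$ satisfy: if both steps are "up" then the two colors are $(a_i - 1)$ and $a_i$ mod 3 (differ by $1$); if both "down", they are $a_i$ and $(a_i - 1)$ wait — recompute; in all four cases one works out the ordered pair of colors mod $3$ and checks that an unequal pair forces a \emph{unique} choice of (step $i$ direction, step $i{+}1$ direction, $a_i \bmod 3$), using that $3 > 2$ so $\{x, x-1\} \bmod 3$ determines the order of $x, x-1$. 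Having oriented one step, hence recovered $a_i \bmod 3$ for the relevant $i$, propagate: knowing $a_i \bmod 3$ and $\mathrm{color}(e_{i+1}) = \min(a_i, a_{i+1}) \bmod 3$ determines whether step $i+1$ is up or down (the two candidates $a_i$ and $a_i - 1$ have distinct residues mod $3$), which gives $a_{i+1} \bmod 3$, and so on in both directions along the walk. This recovers every step direction, hence every sign. I expect the main obstacle to be the bookkeeping in the base case — verifying that an unequal consecutive-color pair genuinely forces a unique orientation and is guaranteed to exist given only "two distinct observed colors" somewhere in the walk (one must trace from an edge of one color to an edge of another color and locate an adjacent differing pair along the way). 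The propagation step itself is routine once $3 \geq 2$ is invoked to separate residues of consecutive integers.
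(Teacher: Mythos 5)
Your proposal is correct and follows essentially the same route as the paper's proof: you anchor the argument at a consecutive pair of edges with distinct colors (which must exist once two distinct colors appear anywhere in the walk), use the four-case analysis at the shared node to show an unequal color pair forces both steps to be oriented the same way ($c_{i+1}=c_i+1$ for both positive, $c_{i+1}=c_i-1$ for both negative), and then propagate signs edge by edge using the fact that consecutive integers have distinct residues mod $3$. The only cosmetic difference is your reformulation via the $\pm1$ sequence $a_i=d_o(u_i)$ and $\mathrm{color}(e_i)=\min(a_{i-1},a_i)\bmod 3$, which is the same computation the paper performs directly on the edge roots.
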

\begin{proof}
Assume, without loss of generality, that the walk includes two consecutive edges
$e_i = (u_{i-1}, u_i)$ and $e_{i+1} = (u_i, u_{i+1})$ with distinct colors
$c_i$ and $c_{i+1}$ under {\sc mod3$(o)$}.
We observe the following relationship between the colors and signs.
When $e_i$'s sign is positive, namely, $d_o(u_{i-1})=d_o(u_i)-1$, we have
$c_{i+1}=c_i+1 \bmod 3$ if $e_{i+1}$'s sign is positive, and
$c_{i+1}=c_i \bmod 3$ if $e_{i+1}$'s sign is negative.
Similarly, when $e_i$'s sign is negative, namely, $d_o(u_i)=d_o(u_{i-1})+1$, we have
$c_{i+1}=c_i \bmod 3$ if $e_{i+1}$'s sign is positive, and $c_{i+1}=c_i-1 \bmod 3$ otherwise.
Because $c_i \neq c_{i+1} \mod 3$, we conclude that $e_{i+1}$'s sign is positive
if $c_{i+1}=c_i + 1$, and negative otherwise.
Once the sign of one edge in the walk is determined, we can repeat this process
to infer the signs of all other edges.
\end{proof}
When the walk includes only edges of one color, we try
{\sc mod3} with $d-t+1$ other choices of the \emph{origin}. In particular, let {\sc mod3$(o_j)$}
be the 3-coloring with respect to the origin $o_0=o$ and origin $o_j = (0, 0, \ldots, n-1, \ldots, 0)$ with
$j$th coordinate $n-1$ and zeroes everywhere else, for $j=1,2, \ldots, d-t+1$.

\begin{lemma} \label{lem:trace2}
Given any $t$-dimensional walk $W=(u_0, u_1, \dots, u_\l)$ in $G(n,d)$ that visits at least two edges, there is
a 3-coloring {\sc mod3$(o_j)$} for which $W$ is \emph{not} monochromatic, for $j=0,1, \ldots, d-t+1$.
\end{lemma}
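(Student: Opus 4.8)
The plan is to show that if $W$ were monochromatic under \emph{all} of the $d-t+2$ colorings $\textsc{mod3}(o_0), \textsc{mod3}(o_1), \dots, \textsc{mod3}(o_{d-t+1})$, then $W$ would be forced to span too few dimensions, contradicting the hypothesis that $W$ is $t$-dimensional. First I would record what ``monochromatic under $\textsc{mod3}(o_j)$'' means combinatorially. Under $\textsc{mod3}(o_j)$, an edge rooted at $u$ gets color $d_{o_j}(u)\bmod 3$, where $d_{o_j}(u)$ is the $\ell_1$-distance from $o_j$ to $u$; consecutive edges $e_i=(u_{i-1},u_i)$ and $e_{i+1}=(u_i,u_{i+1})$ get the same color iff the walk's net displacement in $d_{o_j}$-distance across the two edges is $\equiv 0 \pmod 3$ --- i.e., one edge moves toward $o_j$ and the other away (one positive sign, one negative, relative to the $o_j$-orientation). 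So if $W$ visits at least two edges and is monochromatic under $\textsc{mod3}(o_j)$, then \emph{every} pair of consecutive edges has opposite signs relative to $o_j$: the walk strictly alternates between ``toward $o_j$'' and ``away from $o_j$'' at every step.

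The key observation is how changing the origin from $o_0$ to $o_j$ flips signs. Since $o_j$ differs from $o_0$ only in coordinate $j$ (it is $n-1$ instead of $0$), for any edge $e$ parallel to a dimension $\ell \neq j$ the sign of $e$ relative to $o_0$ equals its sign relative to $o_j$; but for an edge parallel to dimension $j$, the two signs are \emph{opposite} (moving in the $+x_j$ direction decreases the distance to $o_j$ while increasing it to $o_0$). Now suppose, for contradiction, that $W$ is monochromatic under both $\textsc{mod3}(o_0)$ and $\textsc{mod3}(o_j)$. Consider a consecutive pair $e_i, e_{i+1}$ in $W$. Monochromaticity under $o_0$ says their $o_0$-signs differ; monochromaticity under $o_j$ says their $o_j$-signs differ. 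If neither $e_i$ nor $e_{i+1}$ is parallel to dimension $j$, both sign conditions refer to the same pair of signs, which is consistent. But if exactly one of them, say $e_{i+1}$, is parallel to dimension $j$, then passing to $o_j$ flips only $e_{i+1}$'s sign: the $o_0$-signs of $(e_i,e_{i+1})$ are a flip of the $o_j$-signs, so one of the two ``differ'' conditions fails --- contradiction. Hence if $W$ is monochromatic under both $o_0$ and $o_j$, then for every consecutive pair $e_i,e_{i+1}$, either both or neither is parallel to dimension $j$; chaining along the walk, this forces \emph{every} edge of $W$ to be parallel to dimension $j$, so $W$ is $1$-dimensional.

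To finish, I would argue by a counting/pigeonhole step over the $d-t+1$ alternative origins $o_1,\dots,o_{d-t+1}$, together with $o_0$. Suppose $W$ is monochromatic under all of $\textsc{mod3}(o_0),\dots,\textsc{mod3}(o_{d-t+1})$. By the previous paragraph, monochromaticity under $o_0$ and $o_j$ together forces all of $W$ to lie in dimension $j$ --- unless $W$ happens to use \emph{no} edge parallel to dimension $j$ at all, in which case the $o_0$- and $o_j$-colorings agree on $W$ edge-for-edge and no new information is gained. Since $W$ is $t$-dimensional with $t\ge 2$, it cannot be $1$-dimensional, so for each $j\in\{1,\dots,d-t+1\}$ the walk $W$ uses no edge parallel to dimension $j$; that is, all $t$ dimensions spanned by $W$ lie in $\{d-t+2,\dots,d\}$, a set of size $t-1$, which is impossible. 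This contradiction shows $W$ cannot be monochromatic under all $d-t+2$ colorings, proving the lemma. The main obstacle is the sign-flip bookkeeping in the middle step: one must handle carefully the boundary cases where $u_i$ has $x_j=0$ or $x_j=n-1$ (so the relevant edge may not exist), but since we only ever compare signs of edges that actually appear in $W$, the argument goes through as stated.
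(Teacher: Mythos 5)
Your argument is sound in its core and takes a genuinely different route from the paper's. The paper argues \emph{locally}: by pigeonhole it finds a single consecutive pair $e_i,e_{i+1}$ with $e_i$ parallel to some dimension $j\leq d-t+1$, and then does a four-way case analysis on the roots of that one pair to show that either {\sc mod3$(o_0)$} or {\sc mod3$(o_j)$} already separates their colors. You instead argue \emph{globally} by contradiction: monochromaticity under an origin forces strict sign-alternation along the whole walk, passing from $o_0$ to $o_j$ flips exactly the signs of dimension-$j$ edges, and combining the two alternation constraints forces the ``parallel to $j$'' property to be constant along the walk; a dimension count then finishes. Your route is arguably cleaner and makes the role of the $d-t+1$ auxiliary origins more transparent than the paper's figure-based case split. (Both proofs tacitly need the same convention, worth stating explicitly: under {\sc mod3$(o_j)$} an edge's color is read off the endpoint \emph{nearer to $o_j$}, so that ``same color for consecutive edges iff opposite signs relative to $o_j$'' holds; with the fixed $\pr$-rooting of Section 2 this equivalence can fail for a pair in which exactly one edge is parallel to dimension $j$.)

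There is one genuine gap: you invoke ``$W$ is $t$-dimensional with $t\geq 2$'' to rule out the case that every edge of $W$ is parallel to dimension $j$, but the lemma (and Theorem~\ref{thm:Undir}) allows $t=1$, and for $t=1$ your dichotomy yields no contradiction (the walk may legitimately lie entirely in dimension $j^*$, satisfying both alternation constraints for every origin). Note that your proof never uses the hypothesis that $W$ visits at least two edges --- that hypothesis is exactly what is needed here: for $t=1$, a one-dimensional walk that is sign-alternating under {\sc mod3$(o_0)$} must shuttle back and forth across a single edge, hence visits only one distinct edge, so {\sc mod3$(o_0)$} is already non-monochromatic for any admissible $W$. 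Adding that one observation closes the gap.
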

\begin{proof}
By the pigeon principle, a $t$-dimensional walk must contain at least one edge that parallels to one of the first $d-t+1$ axis. Assume, without loss of generality, that the walk includes two consecutive edges $e_i=(u_{i-1}, u_i)$ and $e_{i+1}=(u_i, u_{i+1})$ and edge $e_i$ parallels to the $j$th dimension, $1\leq j\leq d-t+1$. Let $c_i^j$ denote the color for edge $e_i$ using 3-coloring {\sc mod3$(o_j)$}. We show in Figure~\ref{fig:fig5} the color of edges $e_i$ and $e_{i+1}$ are distinct under 3-coloring {\sc mod3$(o_0)$} or {\sc mod3$(o_j)$}, namely, either $c_i^0\neq c_{i+1}^0$ or $c_i^j\neq c_{i+1}^j$ holds. (1) When $e_{i}$ roots at $u_{i-1}$ and $e_{i+1}$ roots at $u_i$, we have $c_{i+1}^0=c_{i}^0+1 \bmod 3$. Similarly (2) when $e_{i}$ roots at $u_{i}$ and $e_{i+1}$ roots at $u_{i+1}$, we have $c_{i+1}^0=c_{i}^0-1 \bmod 3$. On the other hand, (3) when both $e_i$ and $e_{i+}$ roots at $u_{i}$, we have $c_{i+1}^j=c_{i}^j+1 \bmod 3$. And at last, (4) when $e_{i}$ roots at $u_{i-1}$ and $e_{i+1}$ roots at $u_{i+1}$, we have $c_{i+1}^j=c_{i}^j-1 \bmod 3$.
\end{proof}

\begin{figure}[htbp]
\centering
\includegraphics[angle=0, width=0.6\textwidth]{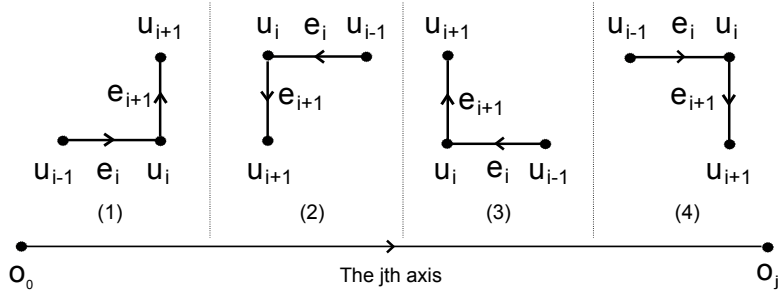}
\caption{Demonstration of 4 situations in Lemma~\ref{lem:trace2}.}
\label{fig:fig5}
\end{figure}

\subsection{Coloring Scheme for Undirected $t$-Observability}

Our final coloring scheme combines the orthogonal array based coloring with the
{\sc mod3} coloring. We use the orthogonal array $\A$ with $\s^t \geq n^d$, whose
$i$th row is used to color edges rooted at node with rank $i$. The algorithm uses
$2^t3^{d-t+2}$ groups of disjoint colors, $C_j$, each with $d \sigma$ colors, where
$0 \leq j < 2^t3^{d-t+2}$, and $\s = O(n^{d/t})$.

\paragraph{{\sc UndirColor}$(u)$:}

\begin{itemize}	\advance\itemsep by -4pt

\item Let $(a_1, a_2, \ldots, a_t )$ be the unique coefficient vector of a polynomial whose
	$p$-index $i$ equals the rank of $u$, namely, $i = \R(u)$. Let $(x_1, x_2, \ldots, x_d)$ be the coordinates of $u$.
\\
\item Let $m_1 \;=\; \sum_{k=1}^t (a_k \bmod 2) 2^{t-k}$.
\\
\item Let $m_2 \;=\; \sum_{j=1}^{d-t+1} ((\sum_{k=1}^d x_k) + n - 2x_j \bmod 3) 3^{j} + (\sum_{k=1}^d x_k  \bmod 3)$.
\\
\item Let $m \;=\; 2^t m_2 + m_1$.
\\
\item An edge $e$ rooted at $u$ gets color $A_{ij} \:+\: (j-1) \sigma$
	from the color group $C_m$ if $e$ is parallel to dimension $j$.

\end{itemize}

We can now prove the following result for
undirected $t$-dimensional walks in $G(n,d)$.

\begin{theorem} \label{thm:Undir}
Given an undirected lattice graph $G(n,d)$, we can color its edges with $O(n^{d/t})$ colors so that any $t$-dimensional walk that visits at least two distinct edges of $G(n,d)$ can be observed, where $d$ is a constant and $t \leq d$.
\end{theorem}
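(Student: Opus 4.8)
The plan is to show that {\sc UndirColor} packs three independent pieces of information into the color of each edge, and that the decoder can peel them off one layer at a time. Fix the observed walk $W=(u_0,u_1,\ldots,u_\l)$, a $t$-dimensional walk with at least two distinct edges. First note that the color budget is fine: there are $2^t3^{d-t+2}$ pairwise disjoint groups of $d\sigma$ colors each, and since $d$ is a constant and $\sigma=O(n^{d/t})$, the total is $O(n^{d/t})$. Because the groups are disjoint, and within a group the colors lie in $\{0,\ldots,d\sigma-1\}$ split into $d$ blocks of length $\sigma$, a single observed color of an edge $e$ rooted at a node $u$ of rank $i$ reveals: (i) the dimension $j$ of $e$ and the orthogonal-array entry $\A_{ij}$ (as quotient and remainder modulo $\sigma$); (ii) the group index $m$, and from $m=2^t m_2+m_1$ both the parity vector $m_1=\sum_k(a_k\bmod 2)2^{t-k}$ of the coefficient tuple of $i$ (namely $m_1=m\bmod 2^t$) and the base-$3$ vector $m_2$, which records (up to a fixed additive shift) the colors of $e$ under the $d-t+2$ colorings {\sc mod3$(o_0)$},$\ldots$,{\sc mod3$(o_{d-t+1})$}.

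Next I would recover the \emph{sign} of every edge of $W$ and thereby reduce the problem to the directed case. By Lemma~\ref{lem:trace2}, at least one of {\sc mod3$(o_0)$},$\ldots$,{\sc mod3$(o_{d-t+1})$} is not monochromatic on $W$; since step one has already extracted all of these mod-$3$ colorings along $W$ from the color sequence (and a global additive shift is irrelevant, as the sign-propagation uses only differences of consecutive colors), the argument of Lemma~\ref{lem:trace1} — which is stated for the origin but works verbatim for any corner $o_j$ — determines, for each edge $e_i=(u_{i-1},u_i)$, whether $\R(u_i)-\R(u_{i-1})$ equals $+n^{d-j}$ or $-n^{d-j}$. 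Knowing all dimensions and all signs, the position of any single node of $W$ pins down the embedding of the whole walk (the undirected analogue of Lemma~\ref{lem:tracingd}, since a dimension together with a sign is exactly an orientation), and the rank difference between any two nodes of $W$ is computed by summing the per-edge changes $\pm n^{d-j}$ along the traced walk (the analogue of the Ranking Lemma~\ref{lem:rank}). Moreover, the coloring restricted to the data ``dimension $j$, entry $\A_{ij}$, parity group $m_1$'' is exactly {\sc ColorD}, so Lemma~\ref{lem:rank2A} applies with the $m_1$ values supplied by step one.

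The decoding step then mirrors the proof of Theorem~\ref{thm:upperD}. Let $u^*$ be the root of $W$, i.e.\ its $\pr$-minimum node; since every node of $W$ is $\pr$-above $u^*$, any edge of $W$ incident to $u^*$ is rooted at $u^*$, so the color sequence gives us the $m_1$ group of $u^*$. Let $j_1<\cdots<j_t$ be the $t$ dimensions spanned by $W$, and for each $k$ pick an edge $e_k$ of $W$ parallel to $j_k$, rooted at $u_k$ with rank $r_k$; its color yields $\A_{r_k j_k}$. Using the Ranking-Lemma analogue, compute $r_k-\R(u^*)$; then feed this difference together with the $m_1$ groups of $e_k$ and of a $u^*$-rooted edge into Lemma~\ref{lem:rank2A} to obtain $\A_{r_k j_k}-\A_{\R(u^*)\,j_k}$, hence $\A_{\R(u^*)\,j_k}$. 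Doing this for $k=1,\ldots,t$ produces the ordered $t$-tuple $(\A_{\R(u^*)\,j_1},\ldots,\A_{\R(u^*)\,j_t})$, which by Lemma~\ref{lem:OA} occurs in exactly one row of $\A$; that row index is $\R(u^*)$, from which the coordinates of $u^*$ follow. Finally, the tracing step localizes every node of $W$, in particular the current node $u_\l$.

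The main obstacle is making the two mechanisms — {\sc mod3} sign recovery and orthogonal-array rank recovery — coexist inside a single coloring that still uses only $O(n^{d/t})$ colors; this is precisely what the product group structure $m=2^t m_2+m_1$ buys, keeping the three kinds of data in separate ``digits'' of the color so the decoder can isolate them independently. The one genuinely new difficulty relative to the directed case is the sign recovery itself: a single {\sc mod3} coloring may be monochromatic on $W$ (for instance a walk that stays on one $\ell_1$-distance level set), which is why the $d-t+1$ auxiliary corners are needed, and why the hypothesis ``at least two distinct edges'' cannot be dropped — the back-and-forth walk on a single edge is monochromatic under every {\sc mod3$(o_j)$} and is, as noted, genuinely unobservable.
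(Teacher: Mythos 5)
Your proposal is correct and follows essentially the same route as the paper's proof: decode the group index $m$ into the parity part $m_1$ and the ternary part $m_2$, use $m_2$ with Lemmas~\ref{lem:trace2} and~\ref{lem:trace1} to recover the sign of every edge, and then run the orthogonal-array decoding of Theorem~\ref{thm:upperD} on the now-effectively-directed walk. Your write-up is in fact more explicit than the paper's (e.g., on why an edge of $W$ rooted at the $\pr$-minimum node always exists, and on the harmless additive shift in the $m_2$ encoding of the {\sc mod3$(o_j)$} colors), but the underlying argument is the same.
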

\begin{proof}
We color the graph $G(n,d)$ using {\sc UndirColor}, and show how to localize a walk $W$ of dimension $t$. Suppose that the color of edge $e_i$ belongs to color group $C_{m_i}$. Let $m_{i1}=m_i \bmod 2^t$, $m_{i2}=m_i\div 2^t$ and consider a ternary representation of $m_{i2}$, namely, $m_{i2}=\sum_{j=0}^{d-t+1}c_i^j3^j$. Then $c_i^j$ is the color of edge $e_i$ using 3-coloring {\sc mod3$(o_j)$}, $j=0,1,\dots,d-t+1$. By Lemma~\ref{lem:trace2} there is a 3-coloring {\sc mod3$(o_j)$} for which $W$ is not monochromatic. Then we can compute the sign of all edges in $W$ by using Lemma~\ref{lem:trace1}. Now for any two nodes in this walk, we could using the sign of all edges on the path between them to compute their rank different. Therefore, similar argument in Theorem~\ref{thm:upperD} can be applied to uniquely localize the root node of $W$ in lattice graph $G(n,d)$.
\end{proof}
\cut{

Similarly, the $i$th row of the array $\A$ is used to assign colors to the $d$ edges
rooted at the node with rank $i$, while the column $j$ is used to assign colors for all
edges parallel to dimension $j$. The use of disjoint color groups $C_m$ is required to detect the position of the root node ($m_1$) in the lattice and to decode the position of all other nodes ($m_2$) from the root node of any walk.
%
The following lemma easily follows from the assignment.

\begin{lemma}
{\sc ColorDU} assigns colors to all the edges of $G(n,d)$, uses $O(n^{d/t})$
colors, and no two outgoing edges of a node receive the same color.
\end{lemma}

We now discuss the main differences between directed and undirected graph, which is the proof of the Tracing Lemma:

\begin{lemma} {[$d$-Dimensional Tracing Lemma]} \label{lem:tracingdu}
Let $G(n,d)$ be colored using the scheme ${\sc ColorDU}$, and let $W$ be a walk in $G(n,d)$.
Fixing the position of any node of $W$ leads to a \emph{unique} embedding of $W$ in $G(n,d)$.
\end{lemma}

It is simple to prove the tracing lemma previously because the color sequence of a walk on directed graph gives us the orientations of all edges in this walk. However, it is not the case for undirected graph as edges do not have orientations. To deal with this problem, we use a node sequence instead of edge sequence to represent a walk in undirected graph. To be specific, a length $l$ walk $\W=\{u_1, u_2, \ldots, u_{l+1}\}$, if the $i$th node that $\W$ passes is $u_i$. Let $e_i$ be the edge between nodes $u_i$ and $u_{i+1}$. As the color of $e_i$ gives us its dimension, say the $j$th dimension, we have $\R(u_i)-\R(u_{i+1})=\pm n^{d-j}$. We define the sign of consecutive nodes pair $u_i$ and $u_{i+1}$ is positive if $\R(u_i)-\R(u_{i+1})=n^{d-j}$; negative if $\R(u_i)-\R(u_{i+1}) = - n^{d-j}$. Note that each consecutive nodes pair corresponding to a unique edge $e_i$, so the following proofs we may use the sign of edge $e_i$ represents the sign of nodes pair $u_i$ and $u_{i+1}$.  Clearly, this sign is positive (resp. negative) if the walk from node $u_i$ to $u_{i+1}$ goes toward the positive (resp. negative) direction. Therefore, if we could determine this sign of each consecutive nodes pair in this walk, fixing the position of any node of $\W$ leads to a unique embedding.

We build the following lemmas and coloring scheme dealing with this problem. As the whole coloring scheme is complicated, we start from a simpler case of a 3-color scheme. That is we use only 3 colors to color all edges in this graph in order to detect a unique embedding when fixing the position of any node in some walks. This will help build intuition for the general coloring.

Let $o=(0,0,\ldots,0)$ be the origin and $u=(x_1,x_2,\ldots,x_d)$ be any nodes in graph $G(n,d)$. Let $d_o(u)$ denotes the length of shortest path between nodes $o$ and $u$, then clearly $d_o(u)=\sum_i^d x_i$. Consider the following algorithm named ${\sc ColorD3}(o)$ using 3 colors $\{0, 1, 2\}$: For any node $u$ in graph $G(n,d)$, give color $m \;=\; d_o(u) \bmod 3$ to all edges rooted at $u$. Figure~\ref{fig:fig4} below show a 2-dimensional graph colored with ${\sc ColorD3}(o)$. Clearly all the edges are assigned one of the 3 colors $\{0, 1, 2\}$. The following lemma is crucial to detect embedding from the color sequence.

\begin{figure}[htbp]
\centering
\includegraphics[angle=0, width=0.5\textwidth]{fig4}
\caption{A 2-dimensional lattice graph colored with ${\sc ColorD3}(o)$.
 The color of each edge is indicated by the number on its root node.}
\label{fig:fig4}
\end{figure}

\begin{lemma} \label{lemma:trace1}
Let $G(n,d)$ be colored using the scheme ${\sc ColorD3}(o)$, and let $W$ be a walk in $G(n,d)$. If there exists two consecutive edges $e_i$ and $e_{i+1}$ in this walk whose colors are distinct, then the sign of all consecutive node pairs in this walk can be inferred from the color sequence.
\end{lemma}
\begin{proof}
We prove the lemma by two steps. First we show that if there exist two consecutive edges, $e_i$ and $e_{i+1}$, with different color, then the sign of node pair $u_i$ and $u_{i+1}$ can be determined, where $u_i$ and $u_{i+1}$ are the root nodes of $e_i$ and $e_{i+1}$ respectively. Second we prove that the sign of node pair $u_i$ and $u_{i+1}$ leads to a unique sign on all other node pairs.

Let the color of $e_i$ and $e_{i+1}$ be $c_i$ and $c_{i+1}$, respectively, where $c_i\neq c_{i+1} \bmod 3$. We illustrate all situations of the sign of $e_i$ and $e_{i+1}$ being positive of negative. 1) When the sign of $e_i$ is positive, we have $d_o(u_i)=d_o(u_{i+1})-1$. Therefore if the sign of $e_{i+1}$ is positive, then $c_{i+1}=c_i+1 \bmod 3$; otherwise if the sign of $e_{i+1}$ is negative, then $c_{i+1}=c_i \bmod 3$. 2) When the sign of $e_i$ is negative we have $d_o(u_i)=d_o(u_{i+1})+1$. Therefore if the sign of $e_{i+1}$ is positive, then $c_{i+1}=c_i \bmod 3$; otherwise if the sign of $e_{i+1}$ is negative, then $c_{i+1}=c_i-1 \bmod 3$. As $c_i\neq c_{i+1} \bmod 3$, the sign of $e_i$ is an positive (resp. negative) if and only if $c_{i+1}=c_i + 1 \bmod 3$ (resp. $c_{i+1} = c_i - 1 \bmod 3$).

Similarly, if the sign of one edge is determined. By illustrating all possible situations of its consecutive next edge, we can detect its sign. We take the sign of $e_1$ being positive as an example. Then the color of the next edge $c_2=c_1 \bmod 3$ indicates that the sign of $e_2$ is negative; the color of the next edge $c_2=c_1+1 \bmod 3$ represents that the sign of $e_2$ is positive. This process can be applied for the previous edge of $e_1$, too. We can, therefore, infer the sign of all edges in this walk from its color sequence.
\end{proof}

Now the only situation we cannot handle is when the color sequence of a walk contains only one color. Note that our color scheme ${\sc ColorD3}(o)$ is established using the origin node $o$. This allows us to apply similar $3$-color scheme based on other node rather than the origin node. If when the color sequence of any walk under color scheme ${\sc ColorD3}(o)$ contains only when color, we could guarantee that there exists another node $u$ such that the color sequence of a walk under color scheme ${\sc ColorD3}(u)$ contains more than one color, then we are able to detect the orientations of the edges for any walk. The following lemma shows one possible selection of such nodes $u$.

\begin{lemma} \label{lemma:trace2}
Let $G(n,d)$ be colored using the scheme ${\sc ColorD3}(o)$, and let $W$ be a $t$ dimensional walk in $G(n,d)$. If the color sequence of $W$ under color scheme ${\sc ColorD3}(o)$ contains only one color, then there exists one node $a_i=(0,\ldots,0,n,0,\ldots,0)$ with the $i$th coordinate being $n$ and all other coordinates being $0$, $i=1,2,\ldots,d-t+1$, such that color sequence of $W$ under color scheme ${\sc ColorD3}(a_i)$ contains more than one color.
\end{lemma}
\begin{proof}
By the pigeon principle, a $t$ dimensional walk must contain at least one edge parallel to one of the first $d-t+1$ axis. We denote this edge by $e$ and let its dimension be $i$. Without loss of generality we assume that the sign of $e$ is positive. As the color sequence of $W$ under color scheme ${\sc ColorD3}(o)$ contains only one color, then every other edges in walk $W$ should move away from node $o$ and all other edges must move towards $o$. Note that in color scheme ${\sc ColorD3}(a_i)$ we compute the distance to node $a_i$ instead of using $o$. Then in color scheme ${\sc ColorD3}(a_i)$, $e$ is now moving towards node $a_i$. Then the previous edge of $e$, the next edge of $e$ and $e$ must all move towards $a_i$, which should be colored in different 3 colors. This complete the proof. [Prof. Suri, I think we may need to use other way to prove this thm. I tried several ways, and this is the best I got so far.]
\end{proof}

Now we can go back to prove the tracing lemma for undirected graph $G(n,d)$.

\begin{proof}
Let the edges in $W$ be $(e_1,e_2,\ldots,e_l)$ and let the colors of $e_i$ come from color group $m'_i$, $i=1,2,\ldots,l$. Let $a_0=o=(0,0,\ldots,0)$ be the origin node and $a_i=(0,\ldots,0,n,0,\ldots,0)$ with the $i$th coordinate being $n$, where $i=1,2,\ldots,d-t+1$. First we decode the sign information from $m'_i$ by modula it by $2^t$, namely $m_i=m'_i \bmod 2^t, \forall i$. Then consider the ternary representations of $m_i$, namely $m_i=\sum_{k=1}^{d-t+1} b_k^i 3^k$. Clearly $b_k^i\in \{0,1,2\}$ and $b_k^i$ represents the sign of edge $e_i$ from color scheme ${\sc ColorD3}(a_k)$, where $0\leq k\leq d-t+1$ and $1\leq i\leq d$. Then by lemma (\ref{lemma:trace1}) and (\ref{lemma:trace2}), either we could find the detect the signs of all edges from $b_0^i$, $1\leq i\leq d$, or there exist one dimension $k$ such that we could find the detect the signs of all edges from $b_k^i$, $1\leq i\leq d$. This complete the proof.
\end{proof}

Knowing the orientation of all edges, the following technical lemmas about the coloring scheme can be proved similarly as we did in the previous chapter.

\begin{lemma}	\label{lem:differenceu}
Let $p_1, p_2$, respectively, denote the $p$-indices of polynomials with coefficient
vectors $(a_1, a_2, \ldots, a_t)$ and $(b_1, b_2, \ldots, b_t)$.
Then, the quantity $A_{p_1 j} - A_{p_2 j}$ can be uniquely calculated, for any $j$
from the $t$ coordinate \emph{differences}, namely, $a_k-b_k$, for $k=1,2,\dots,t$.
\end{lemma}

\begin{lemma}	\label{lem:parityu}
Let $(a_1, a_2, \ldots, a_t )$ and $(b_1, b_2, \ldots, b_t )$ be coefficient vectors
with $p$-indices $p_1, p_2$, for $p_1 > p_2$, respectively. Let $\ell = p_1-p_2$, and suppose
$(c_1, c_2, \ldots, c_t )$ is the coefficient vector of the polynomial with $p$-index $\ell$.
Then, for each $k = 1,2, \ldots, t$, we either have $a_k-b_k \equiv c_k \pmod{\s}$, or we
have $a_k-b_k \equiv c_k+1 \pmod{\s}$.
\end{lemma}

\begin{lemma}	\label{lem:rank2Au}
Let $G(n,d)$ be colored using the scheme {\sc ColorD}, and let $W$ be a walk in $G(n,d)$.
Let $e_1$ and $e_2$ be two edges in this walk rooted, respectively, at $u_1$ and $u_2$
with ranks (lexicographic order) $\R(u_1) = r_1$ and $\R(u_2) = r_2$, with $r_1 < r_2$.
Then, the difference $r_1-r_2$ along with the colors of $e_1$ and $e_2$ are sufficient
to compute $A_{r_1 j} - A_{r_2 j} \pmod{\s}$.
\end{lemma}

\begin{lemma} {[Ranking Lemma}] 	\label{lem:ranku}
Let a lattice graph $G(n,d)$ be colored using the algorithm {\sc ColorD}, and let
$u_1, u_2$ be two nodes in a walk $W$ of $G(n,d)$. We can calculate the difference of
their ranks $\R(u_1) - \R(u_2)$ from just the color sequence of $W$.
\end{lemma}

We are now ready to prove our main theorem for undirected graph.

%

\begin{theorem} \label{thm:upperDu}
$O(n^{d/t})$ colors suffice for $t$-observability of any lattice graph $G(n,d)$,
for any fixed dimension $d$ and $t \leq d$.
\end{theorem}
\begin{proof}
We color the graph $G(n,d)$ using {\sc ColorDu}, and show how to localize a
walk $W$ of cycle dimension $t$. Without loss of generality, let
$j_1,j_2,\dots,j_t$, where $1\leq j_1 < j_2<\dots<j_t\leq d$, be the dimensions
spanned by the cycles of $W$. Let $e$ be a $j_1$-dimensional \emph{edge} in $W$
with rank $\R(e)=r_1$. Then the color of $e$ give us the quantity of $\A_{r_1 j_1}$.

Now, suppose the rank of the root node $u$ is $r = \R(u)$. By applying the Ranking Lemma
again, we can compute the difference $r_1 - r$. By Lemma~\ref{lem:rank2Au}, this difference
along with the color sequence of $W$ gives us $\A_{r_1 j_1}-\A_{r j_1}$, from which
we can calculate $\A_{rj_1}$.
By repeating this argument for each of the $t$ dimension spanned by $W$, we can
calculate $\A_{r j_1}, \A_{r j_2}, \ldots, \A_{r, j_t}$. By the property of orthogonal arrays,
the ordered $t$-tuple $(\A_{r j_1}, \A_{r j_2},\dots, \A_{r j_t})$ is unique in $\A$.
Therefore, the rank of the root node $u$ is uniquely determined, which in turn
uniquely localizes $u$ in the lattice graph $G(n,d)$. This completes the proof.
\end{proof}

\section{A Stronger Bound for Directed Lattice Graphs}
\label{sec:orient}

Our analysis in Sections~\ref{sec:defs}--\ref{sec:upper-dir} requires that, for a
dimension to be spanned, the walk must include edges of \emph{both} orientation for that
dimension. In this section, we show a stronger result where such \emph{paired orientations}
are not necessary. In particular, we say that a walk in a directed lattice graph $G(n,d)$
is \emph{$t$-oriented} if it includes edges with $t$ distinct orientations; clearly, the
number of distinct orientation is at most $2d$. It is easy to see that $\Omega (n^{d/t})$
colors are necessary for the observability of $t$-oriented walks: just consider the
directed paths using the first $t$ edges in the construction of Theorem~1.
The following theorem shows that this bound is tight.

\begin{theorem}
Given a directed lattice graph $G(n,d)$, we can color its edges with $O(n^{d/t})$ colors so
that any $t$-oriented walk can be observed, where $d$ is a constant and $t \leq d$.
\end{theorem}

}

\cut{

In this section we introduce the new color scheme for directed lattice graph. We use $G(n,d)$ denote a directed lattice graph of size $n^d$.
Clearly such graph $G(n,d)$ has $2t$ orientations (two for each dimensions). Here we define a directed walk $W$ in $G(n,d)$ spans (resp. negative) orientation $2i-1$ (resp. $2i$) if it includes positive (resp. negative) edges along dimension $i$. Then naturally, a walk is said to be \emph{$t$-oriented} if it spans $t$ distinct orientations, for $t \leq 2d$. Edge in a given directed walk has a natural orientation: it is either in the positive or the negative direction. To be able to refer to this directionality, we call an edge \emph{$j$-up-edge} (resp., \emph{$j$-down-edge}) if it has \emph{positive} (resp. negative) orientation in $j$th dimension in this walk. Then directed graph $G(n,d)$ is called \emph{$t$-observable} if an agent can uniquely determine its current position in the graph from the observed edge color sequence of
\emph{any} $t$-oriented walk, for $t \leq 2d$.

We index all nodes of $G(n,d)$ as we did in previous section. And we use the same orthogonal array $\A$ satisfies $\s^t \geq n^d$, we uniquely associate the node with rank $i$ to the $i$th row of $\A$. We can now describe our coloring scheme in directed graph. Let $u$ be a node of $G(n,d)$ whose rank (lexicographic order) is $\R(u) = i$, where $0 \leq i < n^d$. Then, we use the $i$th row of $\A$ to assign to colors to the
edges rooted at $u$. The rules for assigning colors are described in the following algorithm. The algorithm uses $2^t$ groups of disjoint colors $C_0, C_1, \ldots, C_{2^t -1}$, each with $d \sigma$ colors.

\paragraph{{\sc ColorD2}$(u)$:}

\begin{itemize}	\advance\itemsep by -4pt

\item Let $(a_1, a_2, \ldots, a_t )$ be the unique coefficient vector of a polynomial whose
	$p$-index $i$ equals the rank of $u$, namely, $i = \R(u)$.

\item Let $m \;=\; \sum_{k=1}^t (a_k \bmod 2) 2^{t-k-1}$.

\item If $e$ spans the $j$th orientation, then give it color \\
        $A_{ij} \:+\: (j-1) \sigma$ from the color group $C_m$.

\end{itemize}

The following lemmas easily follow from the assignment.

\begin{lemma}
${\sc ColorD2}$ assigns colors to all the edges of $G(n,d)$, uses $O(n^{d/t})$
colors, and no two outgoing edges of a node receive the same color.
\end{lemma}

\begin{lemma} {[$d$-Dimensional Tracing Lemma]} \label{lem:tracingdu}
Let $G(n,d)$ be colored using the scheme ${\sc ColorD2}$, and let $W$ be a walk in $G(n,d)$.
Fixing the position of any node of $W$ leads to a \emph{unique} embedding of $W$ in $G(n,d)$.
\end{lemma}

Knowing the orientation of all edges, the following technical lemmas about the coloring scheme can be proved similarly as we did in the previous chapter.

\begin{lemma}	\label{lem:differenceu}
Let $p_1, p_2$, respectively, denote the $p$-indices of polynomials with coefficient
vectors $(a_1, a_2, \ldots, a_t)$ and $(b_1, b_2, \ldots, b_t)$.
Then, the quantity $A_{p_1 j} - A_{p_2 j}$ can be uniquely calculated, for any $j$
from the $t$ coordinate \emph{differences}, namely, $a_k-b_k$, for $k=1,2,\dots,t$.
\end{lemma}

\begin{lemma}	\label{lem:parityu}
Let $(a_1, a_2, \ldots, a_t )$ and $(b_1, b_2, \ldots, b_t )$ be coefficient vectors
with $p$-indices $p_1, p_2$, for $p_1 > p_2$, respectively. Let $\ell = p_1-p_2$, and suppose
$(c_1, c_2, \ldots, c_t )$ is the coefficient vector of the polynomial with $p$-index $\ell$.
Then, for each $k = 1,2, \ldots, t$, we either have $a_k-b_k \equiv c_k \pmod{\s}$, or we
have $a_k-b_k \equiv c_k+1 \pmod{\s}$.
\end{lemma}

\begin{lemma}	\label{lem:rank2Au}
Let $G(n,d)$ be colored using the scheme {\sc ColorD2}, and let $W$ be a walk in $G(n,d)$.
Let $e_1$ and $e_2$ be two edges in this walk rooted, respectively, at $u_1$ and $u_2$
with ranks (lexicographic order) $\R(u_1) = r_1$ and $\R(u_2) = r_2$, with $r_1 < r_2$.
Then, the difference $r_1-r_2$ along with the colors of $e_1$ and $e_2$ are sufficient
to compute $A_{r_1 j} - A_{r_2 j} \pmod{\s}$.
\end{lemma}

\begin{lemma} {[Ranking Lemma}] 	\label{lem:ranku}
Let a lattice graph $G(n,d)$ be colored using the algorithm {\sc ColorD2}, and let
$u_1, u_2$ be two nodes in a walk $W$ of $G(n,d)$. We can calculate the difference of
their ranks $\R(u_1) - \R(u_2)$ from just the color sequence of $W$.
\end{lemma}

We are now ready to prove our main theorem for undirected graph.

\begin{theorem} \label{thm:upperDu}
$O(n^{d/t})$ colors suffice for $t$-observability of any lattice graph $G(n,d)$,
for any fixed dimension $d$ and $t \leq 2d$.
\end{theorem}
\begin{proof}
We color the graph $G(n,d)$ using {\sc ColorD2}, and show how to localize a
walk $W$ of cycle dimension $t$. Without loss of generality, let
$j_1,j_2,\dots,j_t$, where $1\leq j_1 < j_2<\dots<j_t\leq 2d$, be the orientations
spanned by $W$. Let $e$ be a $j_1$-oriented \emph{edge} in $W$
with rank $\R(e)=r_1$. Then the color of $e$ give us the quantity of $\A_{r_1 j_1}$.

Now, suppose the rank of the root node $u$ is $r = \R(u)$. By applying the Ranking Lemma
again, we can compute the difference $r_1 - r$. By Lemma~\ref{lem:rank2Au}, this difference
along with the color sequence of $W$ gives us $\A_{r_1 j_1}-\A_{r j_1}$, from which
we can calculate $\A_{r j_1}$.
By repeating this argument for each of the $t$ dimension spanned by $W$, we can
calculate $\A_{r j_1}, \A_{r j_2}, \ldots, \A_{r, j_t}$. By the property of orthogonal arrays,
the ordered $t$-tuple $(\A_{r j_1}, \A_{r j_2},\dots, \A_{r j_t})$ is unique in $\A$.
Therefore, the rank of the root node $u$ is uniquely determined, which in turn
uniquely localizes $u$ in the lattice graph $G(n,d)$. This completes the proof.
\end{proof}

}

\section{Concluding Remarks and Extensions}
\label{sec:concl}

In this paper, we explored an observability problem for lattice graphs, and presented
asymptotically tight bounds for $t$-observability of both directed and undirected graphs.
The bounds reveal an interesting dependence on the ratio between the dimension of the
lattice and that of the walk, \emph{the larger the dimension of the walk the
smaller the color complexity of observing it}, as well as an unexpected conclusion
that the color complexity for full-dimensional walks is independent of the
lattice dimension.

Our results are easy to generalize to non-square lattice graphs, albeit at the expense
of more involved calculations. In particular, given a lattice graph of size
$N = n_1 \times n_2 \times \cdots \times n_d$, the number of colors for $t$-observability is
$\Theta (\sqrt[t]{N})$. Briefly, we use an $(\s, t, d)$ orthogonal array with $\s$
as the smallest prime larger than $N^{1/t}$. The nodes of the lattice
are ranked in the lexicographic order of their coordinates, and we can calculate the rank
of a node $u$ with coordinates $(x_1,x_2,\dots x_d)$, $x_j\in \{0,1,\dots,n_j -1\}$,
as
\[
r (u) \;=\; x_1 n_2 n_3\cdots n_d \;+\: x_2 n_3 \cdots n_d + \dots +
	x_{d-1}n_d + x_d \:=\: \sum^{d}_{j=1}(x_j\Pi_{k=j+1}^d n_k) .\]
Except for these minor modifications, the coloring scheme remains unchanged.
Given a walk $W$, and two nodes $u_1, u_2$ in the walk, the rank distances is calculated
as follows: if the number of $j$-up and $j$-down edges in the walk from $u_1$ to $u_2$ is
$\alpha_j$ and $\beta_j$, then $r(u_1)-r(u_2)=\Sigma_{j=1}^d(\beta_j-\alpha_j)\Pi_{k=j+1}^d n_k$.
The remaining technical machinery does not depend on the square lattice, and carries over to
rectangular lattices.

A number of research directions and open problems are suggested by this research. Our
coloring scheme and proof techniques should extend to other regular but non-rectangular
lattices; we can show this for planar hexagonal lattices but have not explored the idea fully.
On the other hand, observability of general graphs, even planar graphs of bounded degree,
appears to be quite challenging. It will also be interesting to explore observability
under \emph{node-coloring}.

Finally, some of the \emph{small world graph} models are essentially lattice graphs with
few random long-range neighbors at each node~\cite{kleinberg,watts}. It will be
interesting to extend our results to those graphs.

\bibliographystyle{plain}
\bibliography{paper}

\newpage

\appendix

\section{Appendix: Omitted Proofs and Details}

\subsection{Omitted Proofs from Section 5}

\subsubsection{Proof of Lemma~\ref{lem:OA}}
\label{sec:a2}

The array has dimensions $\s^t \times d$, and its entries come from the set
$\{0, 1, \ldots, \s - 1\}$, by construction. Thus, we only need to show that within
any $t$ columns of $\A$, all rows are distinct. We prove this by contradiction.
Let $j_1,j_2,\dots,j_t$, for $1 \leq j_1 < j_2 < \dots < j_t \leq d$ be any $t$ columns
of $\A$, and suppose that two different rows with indices $i_1$ and $i_2$, for
$i_1 < i_2$ are identical over these columns.
Let $(a_1, a_2, \ldots, a_t)$ and $(b_1, b_2, \ldots, b_t)$ denote the coefficients
corresponding to polynomials used for rows $i_1$ and $i_2$, respectively. Then,
by Equation~(\ref{eq:OA}), the polynomial used to construct entries of row $i_1$ is
$\P_{i_1} : a_1 x^{t-1} + a_2 x^{t-2} + \cdots + a_t$, and the polynomial used to
construct entries of row $i_2$ is $\P_{i_2} : b_1 x^{t-1} + b_2 x^{t-2} + \cdots + b_t$.
If these rows are identical, then we must have
$\P_{i_1} (j_k) \equiv \P_{i_2} (j_k) \pmod{\s}$, for $k=1, 2, \ldots, t$.  This implies that
$j_1, j_2, \ldots, j_t$ are $t$ distinct roots of the equation
$\P_{i_1} (x) - \P_{i_2} (x) \pmod{\s}$, which is not possible since this polynomial
has degree $t-1$ and at most $t-1$ distinct roots.\footnote{%
        Finite fields belong to unique factorization domains, and therefore a polynomial of
        order $r$ over finite fields has a unique factorization, and at most $r$ roots.}
Therefore, the rows $i_1$ and $i_2$ are not identical over the chosen $t$ columns, proving
that $\A$ is an orthogonal array.

\subsubsection{Proof of Lemma~\ref{lem:parity}}
\label{sec:a3}

Because $\ell = p_1-p_2$, we have $\ell = \sum_{k=1}^t (a_k - b_k) \s^{t-k}$.
Perform a $\pmod{\s^{t-k+1}}$ operation on both sides of the equality:
$(a_1-b_1)\s^{t-1}+\dots+(a_t-b_t) \;\equiv\; c_1\s^{t-1}+\dots+c_t$.
We get $(a_k-b_k)\s^{t-k}+\dots+(a_t-b_t) \;\equiv \; c_k\s^{t-k}+\dots+c_t
\pmod{\s^{t-k+1}}$, which is
equivalent to $(a_k-b_k-c_k)\s^{t-k} \;\equiv\; \sum_{i=k+1}^t (c_i-a_i+b_i) \s^{t-i}
\pmod{\s^{t-k+1}}$.
Because $a_i, b_i, c_i \in GF(\s)$, the right hand side clearly satisfies
the following bounds:
\[ -\s^{t-k} \:<\: \sum_{i=k+1}^t (c_i-a_i+b_i) \s^{t-i} \:<\: 2\s^{t-k} . \]
But since $a_k,b_k,c_k$ on the left side are positive integers, there are only two feasible
solutions: $\sum_{i=k+1}^t (c_i-a_i+b_i)\s^{t-i} \equiv 0  \pmod{\s^{t-k+1}}$
or $\sum_{i=k+1}^t (c_i-a_i+b_i)\s^{t-i} \equiv \s^{t-k} \pmod{\s^{t-k+1}}$.
Therefore, we must have either
$a_k-b_k-c_k \equiv 0 \pmod{\s}$, or $a_k-b_k-c_k \equiv 1 \pmod{\s}$.
This completes the proof.

\subsubsection{Proof of Lemma~\ref{lem:rank2A}}
\label{sec:a4}

Suppose that the colors of $e_1$ and $e_2$ belongs to color groups $C_{m_1}$ and $C_{m_2}$,
respectively. Let us consider the binary representations of $m_1, m_2$, namely,
$m_1 = \sum_{k=1}^t (a_k \bmod 2) 2^{t-k-1}$ and $m_2 = \sum_{k=1}^t (b_k \bmod 2) 2^{t-k-1}$.
If $(a_k - b_k) \equiv 0 \pmod{2}$, then by Lemma~\ref{lem:parity} we can conclude
that $(a_k - b_k) \equiv c_k \pmod{\s}$ if $c_k$ is even; otherwise it equals $c_k +1$.
Similarly, $(a_k - b_k) \equiv 1 \pmod{2}$ tells us that
$(a_k - b_k) \equiv c_k \pmod{\s}$ if $c_k$ is odd (and $c_k +1$ otherwise).
Because we can infer from the embedding whether $a_k > b_k$, we can calculate
$a_k - b_k$ from $a_k - b_k \pmod{\s}$.
Once we know all $a_k - b_k$, for $k=1,2, \ldots, t$ we can
compute $A_{r_1 j} - A_{r_2 j} \pmod{\s}$
using Lemma~\ref{lem:difference}.

\subsection{Omitted Proofs from Section 6}

\subsubsection{Proof of Lemma~\ref{lem:trace1}}
\label{sec:trace1}

Assume, without loss of generality, that the walk includes two consecutive edges
$e_i = (u_{i-1}, u_i)$ and $e_{i+1} = (u_i, u_{i+1})$ with distinct colors
$c_i$ and $c_{i+1}$ under {\sc mod3$(o)$}.
We observe the following relationship between the colors and signs.
When $e_i$'s sign is positive, namely, $d_o(u_{i-1})=d_o(u_i)-1$, we have
$c_{i+1}=c_i+1 \bmod 3$ if $e_{i+1}$'s sign is positive, and
$c_{i+1}=c_i \bmod 3$ if $e_{i+1}$'s sign is negative.
Similarly, when $e_i$'s sign is negative, namely, $d_o(u_i)=d_o(u_{i-1})+1$, we have
$c_{i+1}=c_i \bmod 3$ if $e_{i+1}$'s sign is positive, and $c_{i+1}=c_i-1 \bmod 3$ otherwise.
Because $c_i \neq c_{i+1} \mod 3$, we conclude that $e_{i+1}$'s sign is positive
if $c_{i+1}=c_i + 1$, and negative otherwise.
Once the sign of one edge in the walk is determined, we can repeat this process
to infer the signs of all other edges.


\end{document}